\theoremstyle{definition}
\newtheorem{definition}{Definition}
\theoremstyle{plain}
\newtheorem{lemma}{Lemma}
\newtheorem{theorem}{Theorem}
\newtheorem{corollary}{Corollary}
\newenvironment{proofs}{\proof}{\endproof}
\newcommand{\ce}{\ensuremath{\mathrm{e}}}
\newcommand{\ci}{\ensuremath{\mathrm{i}}}
\newcommand{\cpi}{\ensuremath{\uppi}}
\definecolor{JLHgreen}{RGB}{50,100,50}
\begin{document}
\title{Interpretable Quantum Advantage in Neural Sequence Learning}
\author{Eric R.\ Anschuetz}
\email{eans@mit.edu}
\affiliation{MIT Center for Theoretical Physics, 77 Massachusetts Avenue, Cambridge, MA 02139, USA}
\author{Hong-Ye Hu}
\affiliation{Department of Physics, University of California San Diego, 9500 Gilman Drive, La Jolla, CA 92093, USA}
\affiliation{Harvard Quantum Initiative, Harvard University, 17 Oxford Street, Cambridge, MA 02138, USA}
\affiliation{Department of Physics, Harvard University, 17 Oxford Street, Cambridge, MA 02138, USA}
\author{Jin-Long Huang}
\affiliation{Department of Physics, University of California San Diego, 9500 Gilman Drive, La Jolla, CA 92093, USA}
\author{Xun Gao}
\email{xungao@g.harvard.edu}
\affiliation{Department of Physics, Harvard University, 17 Oxford Street, Cambridge, MA 02138, USA}
\preprint{MIT-CTP/5456}

\begin{abstract}
    Quantum neural networks have been widely studied in recent years, given their potential practical utility and recent results regarding their ability to efficiently express certain classical data. However, analytic results to date rely on assumptions and arguments from complexity theory. Due to this, there is little intuition as to the source of the expressive power of quantum neural networks or for which classes of classical data any advantage can be reasonably expected to hold. Here, we study the relative expressive power between a broad class of neural network sequence models and a class of recurrent models based on Gaussian operations with non-Gaussian measurements. We explicitly show that quantum contextuality is the source of an unconditional memory separation in the expressivity of the two model classes. Additionally, as we are able to pinpoint quantum contextuality as the source of this separation, we use this intuition to study the relative performance of our introduced model on a standard translation data set exhibiting linguistic contextuality. In doing so, we demonstrate that our introduced quantum models are able to outperform state of the art classical models even in practice.
\end{abstract}

\maketitle

\section{Introduction}\label{sec:introduction}

The field of quantum information processing has reached a watershed in recent years, with the first demonstrations of quantum processors performing tasks on the verge of classical intractability~\cite{arute2019quantum,doi:10.1126/science.abe8770,PhysRevLett.127.180501,PhysRevLett.127.180502,ZHU2022240,hangleiter2022}. Spurred on by these recent experimental developments, there has been a push for finding algorithms that can be performed using either near-term quantum devices, or early error-corrected ones. One of the leading candidates for such algorithms are \emph{quantum machine learning} (QML) algorithms, where training can be offloaded to a classical computer working in conjunction with a quantum computer, potentially minimizing the coherence requirements of the quantum device~\cite{schuld2015introduction,biamonte2017quantum,PhysRevX.8.021050,perdomo2018opportunities,PhysRevLett.121.040502,PhysRevLett.122.040504,havlivcek2019supervised,PhysRevResearch.1.033063}. These algorithms are also motivated by the ability of quantum systems to naturally represent complex probability distributions that are believed to be difficult to represent classically~\cite{doi:10.1126/sciadv.aat9004,coyle2020born,PhysRevResearch.2.033125,sweke2021quantum}, with many proposed architectures for such quantum models~\cite{farhi2018classification,PhysRevX.8.021050,doi:10.1126/sciadv.aav2761,PhysRevA.100.052327}.

However, any proof of advantage in the expressivity of these models over classical models relies on results from computational complexity theory, themselves conditional on complexity theoretic assumptions~\cite{doi:10.1126/sciadv.aat9004,coyle2020born,PhysRevResearch.2.033125,sweke2021quantum,liu2021rigorous}. As the proofs of separation are abstract, it is unclear what realistic classical data sets one should expect a separation to hold in practice. Also, due to the universality of many of these models, they are very likely to be untrainable due to phenomena such as barren plateaus~\cite{mcclean2018barren,cerezo2021cost,marrero2021entanglement,napp2022quantifying} and bad local minima~\cite{anschuetz2021critical,you2021exponentially,anschuetz2022barren} present in their loss landscapes. Because of these concerns, it has become increasingly clear that quantum models should be carefully constructed to fit the task at hand. Above all else, the \emph{interpretability} of any expressivity separation achieved by a QML model has become increasingly important. Interpretability reveals which features of quantum mechanics yield more expressive models compared to classical models and, armed with this knowledge, allows one to find classes of problems where a practical quantum advantage on real data is potentially achievable.

Wishing to construct a model with an interpretable quantum advantage, we here focus on \emph{sequence-to-sequence} learning tasks~\cite{10.5555/2969033.2969173}, and consider a quantization of \emph{linear recurrent neural networks} (LRNNs)~\cite{Hopfield2554}. Classical LRNNs are recurrent neural networks with only linear activation functions. Such models can equivalently be considered a classical dynamical system governed by quadratic Hamiltonian evolution in the canonical variables $\left(\bm{q},\bm{p}\right)$. By lifting these canonical variables to operators $\left(\bm{\hat{q}},\bm{\hat{p}}\right)$ that satisfy the canonical commutation relations (in units where $\hbar=\frac{1}{2}$):
\begin{equation}
    \left[\hat{q}_j,\hat{p}_k\right]=\frac{\ci}{2}\delta_{jk},
\end{equation}
we arrive at a continuous variable (CV) quantum model where time evolution on an eigenstate of the canonical operators is performed under a quadratic Hamiltonian. To measure properties of the state of the system, the most natural choice is to perform \emph{homodyne measurement}; that is, measure linear combinations of the canonical operators $\hat{q}_j$ and $\hat{p}_k$. This yields a quantum generative model where all operations are Gaussian. However, as all operations, initial states, and measurements are Gaussian, there are efficient Wigner function based simulations of sampling from such a system~\cite{PhysRevLett.109.230503}. In other words, such models on $n$ modes are equivalent to deep belief networks~\cite{CHEESEMAN198854}---a class of commonly used classical models---with $2n$ latent variables.

Instead, we extend this model slightly further by allowing for measurements of the canonical operators \emph{modulo $2\cpi$}, beginning in an eigenstate of periodic functions of the canonical operators~\cite{PhysRevA.64.012310,PhysRevA.82.022114}. We call this introduced class of models \emph{contextual recurrent neural networks} (CRNNs). Our main result is that CRNNs are more memory efficient at expressing certain distributions than essentially all trainable classical sequence models, even though CRNNs are not universal for CV quantum computation. Concretely, we show unconditionally that there exists a class of CRNNs with $\operatorname{O}\left(n\right)$ qumodes that can express certain distributions that no ``reasonable'' (which we later describe) classical model is able to represent without an $\operatorname{\Omega}\left(n^2\right)$-dimensional latent space. Though this is only a quadratic separation in memory, the time complexity of inference for classical models is typically superlinear in the model size~\cite{Hopfield2554,10.1162/neco.1997.9.8.1735,cho-etal-2014-learning,10.5555/3295222.3295349}, yielding a superquadratic time separation. As we show a memory (rather than a time) separation, our results also potentially point to a practical \emph{generalization} advantage for CRNNs, as smaller models tend to generalize better than larger models due to formalized versions of Occam's razor~\cite{dziugaite2020revisiting}.
\begin{figure}
    \begin{center}
        \includegraphics[width=\linewidth]{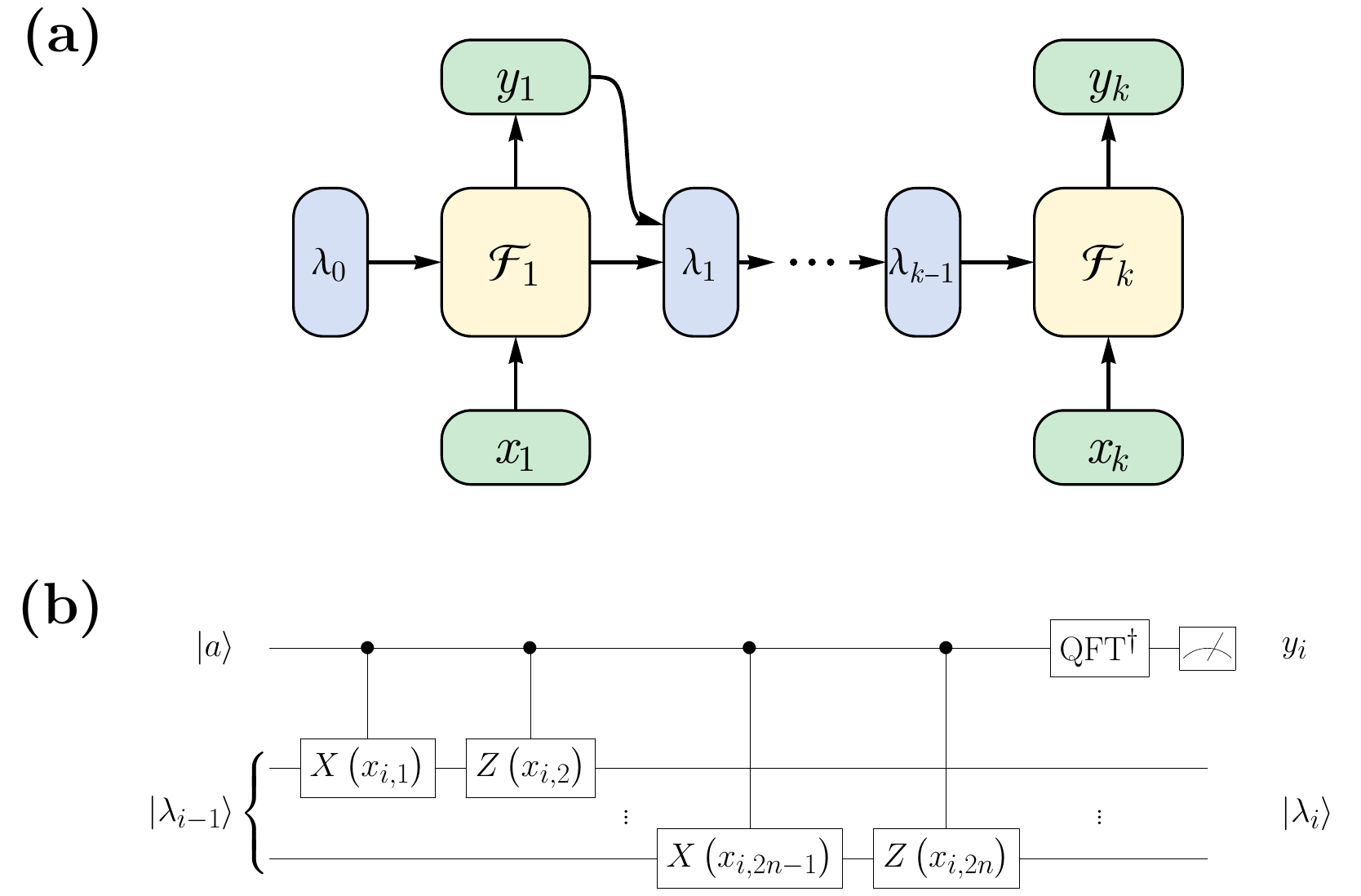}
        \caption{(a) An online neural sequence model. The model autoregressively takes input tokens $\bm{x_i}$ and outputs decoded tokens $\bm{y_i}$ with the map $\mathcal{F}_i$. The model also has an unobserved internal memory with state $\bm{\lambda_{i-1}}\in L$ that $\mathcal{F}_i$ can depend on. When the model is quantized to a CRNN, the $n$-dimensional space of $\lambda_i$ is promoted to the Hilbert space of $n$ qumode states $\ket{\lambda_i}$. (b) An implementation of a phase estimation circuit for CV Pauli operators, which forms the recurrent cell of the CRNN we use to prove our separations. Here, $\ket{a}$ is a fixed ancilla state. Formally, if $\ket{a}$ is a GKP state, this circuit allows for infinite precision measurements. In practice, $\ket{a}$ can be a tensor product of a constant number of qubit $\ket{+}$ states for finite precision phase estimation.\label{fig:online_contextual_model}}
    \end{center}
\end{figure}

Moreover, we are able to show directly that this quantum advantage is due to quantum contextuality~\cite{gleason1975measures,bell1966problem,kochen1975problem,PhysRevLett.65.3373,PhysRevA.82.022114} present in our quantum model. Previously, quantum contextuality was known to be the resource for the expressive power of a certain class of quantized Bayesian networks~\cite{gao2021enhancing}. Our results show that this resource can be used to separate quantum models even from neural networks, which are exponentially more efficient than generic Bayesian networks. Intuitively, quantum contextuality is the statement that quantum measurement results depend on which measurements were previously performed, even if the measurements in question commute. In other words, quantum contextuality is the statement that the measurement of quantum observables cannot be thought of as the revealing of preexisting classical values for the observables. Here, we give a proof of the intuition that reasonable classical models cannot get around the need to ``memorize'' the measurement context of given observables, which is what yields the quadratic memory separation between the quantum and classical models.

Qualitatively, quantum contextuality is similar to the linguistic contextuality present in sentences. Namely, the meaning of a given word in a sentence depends heavily on other words in the sentence, and without this context has no fixed, single meaning. Inspired by this, we also test our constructed model against state of the art classical models on a real-world translation task. In particular, we evaluate the performance of an LRNN~\cite{pmlr-v70-jing17a}, an RNN with gated recurrent units (GRU RNN)~\cite{cho-etal-2014-learning}, a Transformer~\cite{10.5555/3295222.3295349}, a Gaussian model, and our introduced contextual model on a standard Spanish-to-English data set~\cite{spaengdata}. We show that our introduced contextual model achieves better translation performance compared to all other models at each model size we consider. This separation holds even when the online models are constrained to have a similar (and where possible, the same) number of trainable parameters in each recurrent cell.

Our methods provide a novel strategy for designing QML models for near-term devices: through the quantization of simple classical machine learning models with some minimal quantum extension. Though such models are most likely unable to outperform state of the art classical machine learning models on \emph{all} tasks, the intuition gleaned from the simplicity of the quantum models gives guidance as to which problems the quantum models may outperform classical models on. Furthermore, the simplicity of the quantum models may circumvent the recent deluge of untrainability results of general quantum models~\cite{mcclean2018barren,cerezo2021cost,marrero2021entanglement,napp2022quantifying,anschuetz2021critical,you2021exponentially,anschuetz2022barren}. Finally, as such models are restricted in their allowed operations, they are more amenable to implementation on near-term quantum devices than completely generic quantum models.

\section{Classical and Quantum Neural Sequence Learning}\label{sec:background}

\subsection{Classical Sequence Learning}\label{sec:classical_sequence_learning}

\emph{Sequence-to-sequence} or \emph{sequence} learning~\cite{10.5555/2969033.2969173} is the approximation of some given conditional distribution $p\left(\bm{y}\mid\bm{x}\right)$ with a model distribution $q\left(\bm{y}\mid\bm{x}\right)$. This framework encompasses sentence translation tasks~\cite{10.5555/2969033.2969173}, speech recognition~\cite{Prabhavalkar2017}, image captioning~\cite{Vinyals_2015_CVPR}, and many more practical problems.

Sequence modeling today is typically performed using neural network based generative models, or \emph{neural sequence models}. Generally, these models are parameterized functions that take as input the sequence $\bm{x}$ and output a sample from the conditional distribution $q\left(\bm{y}\mid\bm{x}\right)$. The parameters of these functions are trained to minimize an appropriate loss function, such as the (forward) empirical cross entropy:
\begin{equation}
        \hat{H}\left(p,q\right)=-\frac{1}{\left\lvert\mathcal{T}\right\rvert}\sum\limits_{\left(\bm{x},\bm{y}\right)\in\mathcal{T}}p\left(\bm{y}\mid\bm{x}\right)\log\left(q\left(\bm{y}\mid\bm{x}\right)\right),
        \label{eq:emp_cross_ent_main_text}
\end{equation}
where $\mathcal{T}=\left\{\left(\bm{x_i},\bm{y_i}\right)\right\}$ are samples from $p\left(\bm{x},\bm{y}\right)$. The backward empirical cross entropy is similarly defined, with $p\leftrightarrow q$. Note that a model with support on an incorrect translation (i.e. $q\neq 0$, $p=0$) yields an infinite backward cross entropy, and a model failing to have support on a correct translation (i.e. $p\neq 0$, $q=0$) yields an infinite forward cross entropy.

To maintain a resource scaling independent of the input sequence length, neural sequence models usually fall into one of two classes: \emph{online sequence models} (also known as \emph{autoregressive sequence models})~\cite{Hopfield2554,10.1162/neco.1997.9.8.1735,cho-etal-2014-learning}, or \emph{encoder-decoder models} (which include state of the art sequence learning architectures, such as Transformers)~\cite{10.5555/2969033.2969173,10.5555/3295222.3295349}. We focus on online models here, and discuss encoder-decoder models in more detail in Appendix~\ref{sec:seq_learning_background}.

In online models, input tokens $\bm{x_i}$ are translated in sequence to output tokens $\bm{y_i}$ via functions $\mathcal{F}_i$. An unobserved internal memory (or \emph{latent space}) $L$ shared between time steps allows the model to represent long-range correlations in the data. A diagram of the general form of online models is given in Fig.~\ref{fig:online_contextual_model}(a). Generally, there are no restrictions on the forms of $\mathcal{F}_i$, though most neural sequence models are composed of simple smooth (or almost everywhere smooth) functions out of training considerations~\cite{Hopfield2554,10.1162/neco.1997.9.8.1735,cho-etal-2014-learning,10.5555/3295222.3295349}. Here, we generalize from the typical smoothness constraints and consider \emph{locally Lipschitz} maps.

Assuming the codomain of $\mathcal{F}_i$ is $\mathbb{R}^m$, all maps that are almost everywhere differentiable with locally bounded Jacobian norm are locally Lipschitz~\cite{Federer1996}. Realistically, then, locally Lipschitz models can be thought of as all models trainable using gradient based methods. Equivalently, they can be thought of as models not infinitely sensitive to infinitesimal changes in their inputs. This includes all models with standard nonlinearities, including those with ReLU, hyperbolic tangent, and sigmoid activation functions. Note that this condition is much weaker than a \emph{globally Lipschitz} constraint. We give a formal definition of local Lipschitzness in Appendix~\ref{sec:seq_learning_background}.

Though neural networks are often described as functions of real-valued inputs, in practice they are implemented at finite precision. We emphasize that where we analytically consider such networks here---such as in Sec.~\ref{sec:cont_sep}---we consider the formal description of neural networks, which assumes infinite precision. Our numerical experiments in Sec.~\ref{sec:numerics}, however, give evidence that our analytic results also hold in the finite precision regime. We discuss this in more detail in Appendix~\ref{sec:exp_consds}.

\subsection{Contextual Recurrent Neural Networks}\label{sec:quantum_contextuality_prelims}

\begin{table}
    \begin{center}
        \def\arraystretch{1.5}
        \ssmall
        \begin{tabular}{c|c|c}
            $X_1\left(\alpha\right)$ & $X_2\left(\alpha\right)$ & $X_1\left(\alpha\right)^\dagger X_2\left(\alpha\right)^\dagger$\\\hline
            $X_1\left(\alpha\right)^\dagger Z_2\left(\frac{\cpi}{2\alpha}\right)^\dagger$ & $Z_1\left(\frac{\cpi}{2\alpha}\right)^\dagger X_2\left(\alpha\right)^\dagger$ & $-X_1\left(\alpha\right)Z_1\left(\frac{\cpi}{2\alpha}\right)X_2\left(\alpha\right)Z_2\left(\frac{\cpi}{2\alpha}\right)$\\\hline
            $Z_2\left(\frac{\cpi}{2\alpha}\right)$ & $Z_1\left(\frac{\cpi}{2\alpha}\right)$ & $Z_1\left(\frac{\cpi}{2\alpha}\right)^\dagger Z_2\left(\frac{\cpi}{2\alpha}\right)^\dagger$
        \end{tabular}
        \caption{An example of CV quantum contextuality using a Mermin--Peres magic square~\cite{PhysRevLett.65.3373}, with CV Pauli operators $X_i\left(a\right),Z_i\left(a\right)$ generated by $-2\ci a\hat{p}_i,2\ci a\hat{q}_i$, respectively. For any real $\alpha\neq 0$, all operators in each row and column commute. Additionally, the product of each row and column is the identity operator, except for the final column, which gives $-1$. Thus, definite classical values cannot be assigned to each operator without yielding a contradiction.\label{table:cv_stab_example}}
    \end{center}
\end{table}
We now consider a quantization of a simple online model. Generally, online models can be interpreted as a classical dynamical process, where queries $\bm{x_i}$ are made to a physical system described by the latent state $\bm{\lambda_{i-1}}$, yielding a result $\bm{y_i}$ and transforming the latent state $\bm{\lambda_{i-1}}\mapsto\bm{\lambda_i}$ (see Fig.~\ref{fig:online_contextual_model}(a)). For \emph{linear recurrent neural networks} (LRNNs), this can be interpreted as the physical process of querying properties of an underlying system described by $\bm{\lambda_i}$ undergoing Hamiltonian evolution under a quadratic Hamiltonian; this can be seen straightforwardly from Hamilton's equations and the linearity of the model. When quantizing the canonical position and momentum variables to operators satisfying the canonical commutation relations, such a model can then be interpreted as performing sequential measurements on a system undergoing evolution via \emph{Gaussian operations}. When these measurements are restricted to homodyne measurements and all inputs are Gaussian states, this process can be simulated classically with memory linear in the number of modes of the Gaussian system~\cite{PhysRevLett.109.230503}. We minimally extend this, and allow for \emph{non-Gaussian measurements}. In particular, we are here interested in measuring via phase estimation the CV analogues of the Pauli operators~\cite{RevModPhys.77.513} (in units where $\hbar=\frac{1}{2}$):
\begin{equation}
    X_i\left(a\right)=\ce^{-2\ci a\hat{p}_i},\hspace{1cm}Z_i\left(a\right)=\ce^{2\ci a\hat{q}_i}.
    \label{eq:cv_x_and_z}
\end{equation}
We also promote the initial state of the network to a \emph{GKP state}~\cite{PhysRevA.64.012310}, which is an eigenstate of CV Pauli operators. We call a recurrent online model beginning in a GKP state, with cell that takes as input $\bm{x_i}$ a description of a CV Pauli operator and returns its measurement result $\bm{y_i}$, a \emph{contextual recurrent neural network} (CRNN). 

This measurement can formally be performed at infinite precision using Gaussian operations and homodyne measurement with fixed ancilla GKP states~\cite{PhysRevA.64.012310,PhysRevA.93.052304}. A circuit description of this is given in Fig.~\ref{fig:online_contextual_model}(b), where $\ket{a}$ is a uniform superposition over squeezed states $\ket{s}$ with $\hat{q}\ket{s}=q\ket{s}$, where $q\equiv 0\pmod{2\cpi}$. When performed sequentially on an initial GKP state, these measurements are what we consider when we compare in Sec.~\ref{sec:cont_sep} CRNNs against the infinite precision classical neural networks described in Sec.~\ref{sec:classical_sequence_learning}. In this scenario, the model is not universal for CV quantum computation, even when additional Gaussian operations within the latent space are added~\cite{calcluth2022}. Counterintuitively, when the initial state is the vacuum state or a finitely squeezed GKP state, the model \emph{is} universal~\cite{PhysRevLett.123.200502,calcluth2022}; this suggests a potential superpolynomial advantage in the expressive power and the time complexity of inference when implemented at finite precision. We discuss this in more detail in Appendix~\ref{sec:exp_consds}.

Just as in the classical case, one can consider a finite precision approximation of these measurements. In this scenario, phase estimation using ancilla qubits can be performed for each measurement~\cite{nielsen_chuang_2010}. We discuss proposals for the experimental implementation of such a measurement in Appendix~\ref{sec:exp_consds}. In general, parameterized Gaussian operations can be included within each recurrent cell to yield a \emph{trainable CRNN}. This is a special case of the CV neural networks considered in \cite{PhysRevResearch.1.033063}, which also considered the training of such networks. For our expressivity separations, however, we consider the fixed CRNN instance given in Fig.~\ref{fig:online_contextual_model}(b).

For our purposes, these measurements are important as CV Pauli operators exhibit \emph{quantum contextuality}~\cite{PhysRevA.82.022114}, in complete analogy with the contextuality present in qubit Pauli operators~\cite{PhysRevLett.65.3373}. Quantum contextuality is the statement that no definite classical values can be assigned to quantum operators, even when the measured operators in any given measurement scenario commute. For an example of this phenomenon, see Table~\ref{table:cv_stab_example}; there is no consistent assignment of classical values to each operator in the Table for any real $\alpha\neq 0$.

\subsection{Stabilizer Measurement Translation}\label{sec:stab_meas_task}

We now focus on a classical sequence learning task that is naturally performed by the introduced CRNN. In particular, we consider the \emph{$\left(k,n\right)$ stabilizer measurement translation task}, parameterized by $k$ and $n$. Leaving the formal definition for Appendix~\ref{sec:proof_of_express_sep}, we give an informal definition here. We use the terminology of Fig.~\ref{fig:online_contextual_model} for clarity.
\begin{definition}[$\left(k,n\right)$ stabilizer measurement translation task, informal]
    Given a $k$ long sequence of classical descriptions $\bm{x_i}$ of CV Pauli operators on $n$ modes, output a sequence of measurement outcomes $y_i$ that is consistent with measuring these operators sequentially on a fixed GKP state $\ket{\lambda_0}$.
    \label{def:stab_trans_task_inf}
\end{definition}
As described in Sec.~\ref{sec:quantum_contextuality_prelims}, such measurement sequences can display nontrivial correlations due to quantum contextuality. Note that this task is distinct from the measurement of position and momentum operators. Here, we require the measurement of linear combinations of position and momentum operators \emph{modulo $2\cpi$}, as we are measuring the phases of operators generated by position and momentum. This can be done using the CRNN cell described in Sec.~\ref{sec:quantum_contextuality_prelims}. We consider in Appendix~\ref{sec:proof_of_express_sep} a slight generalization of this task, though here we consider Definition~\ref{def:stab_trans_task_inf} with its fixed GKP initial state for simplicity.

\section{Bounds on Stabilizer Measurement Translation}\label{sec:cont_sep}

We now give statements and proof sketches of our main results, which are lower bounds on the performance of classical models in performing the stabilizer measurement translation task described in Sec.~\ref{sec:stab_meas_task}. This will give an expressivity separation between the classical and quantum sequence models.

\begin{figure}
    \begin{center}
        \includegraphics[width=0.85\linewidth]{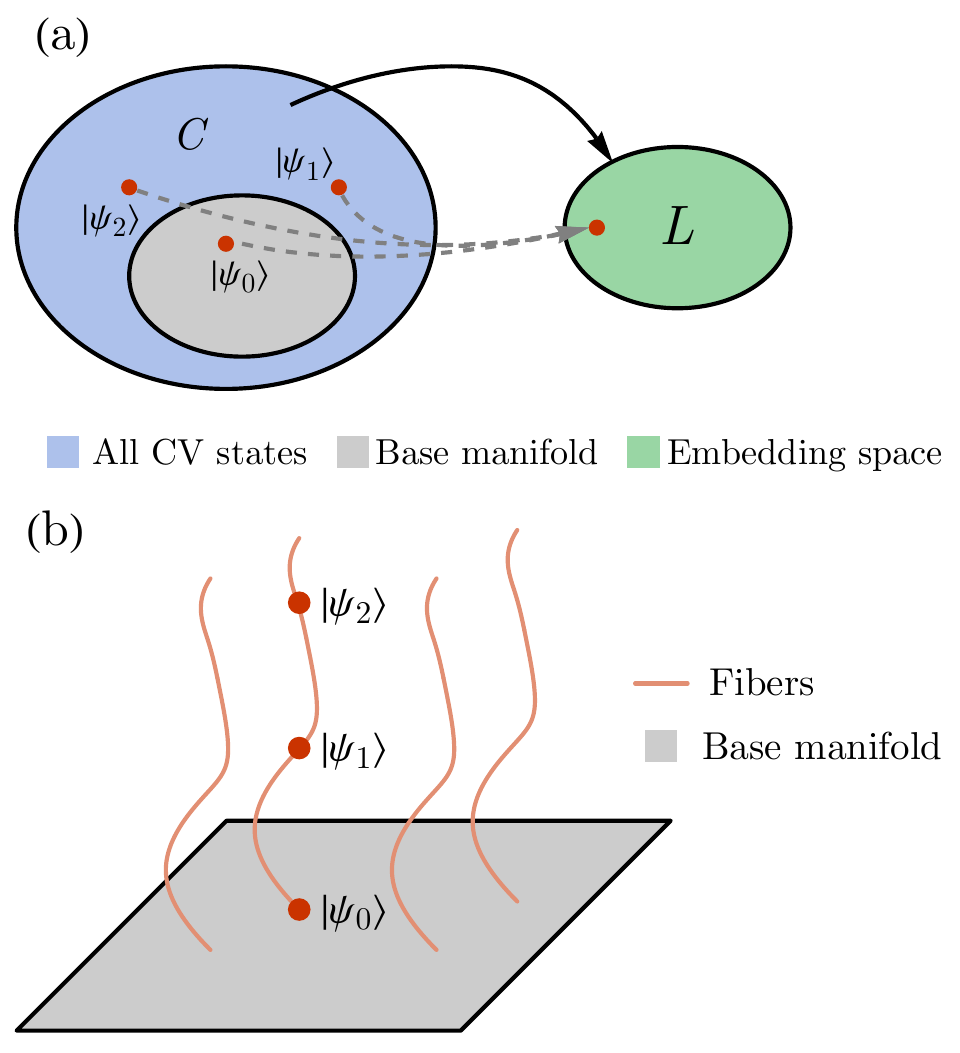}
        \caption{(a) A schematic of the classical model when the dimension of the latent space $L$ (green) is less than $\frac{n\left(n-3\right)}{2}$, where $n$ is the number of modes in the stabilizer measurement translation task. We show that when this is the case, in the neighborhood of some input, only a subspace of inputs (gray) of the same dimension as $L$ are mapped injectively. (b) A sketch of the space of inputs, with fibers locally induced by the model. The base manifold is mapped injectively to $L$. All points on a fiber (e.g. $\ket{\psi_1}$, $\ket{\psi_2}$) map to the same point as their base point (e.g. $\ket{\psi_0}$) in $L$. When the dimension of the fiber is large enough, we show that these states have contextual stabilizers. We then show that this implies that the states have a single-shot distinguishing measurement sequence.\label{fig:proof_schematic}}
    \end{center}
\end{figure}
For discrete models, quantum contextuality was the key resource for showing a separation in expressivity between classical and quantum models~\cite{gao2021enhancing}. Using different proof techniques, we here show that quantum contextuality is also the resource giving the separation between continuous classical and quantum models with infinite dimensional Hilbert spaces. To do this, we specialize to two classes of models: online neural sequence models, and encoder-decoder models (which include state of the art models such as seq2seq models~\cite{10.5555/2969033.2969173} and Transformers~\cite{10.5555/3295222.3295349}). Here, we focus on the memory separation between CRNNs and classical online neural sequence models, and discuss a similar separation against encoder-decoder models in Appendix~\ref{sec:proof_of_express_sep}. We also there formulate a general statement on the classical efficiency of simulating CV Pauli measurements on an initial GKP state, similar in spirit to the fact that Gottesmann--Knill~\cite{gottesman1997,PhysRevA.70.052328} is optimal for qubit stabilizer simulation~\cite{karanjai2018contextuality}.

Our main result can be informally stated as the following Theorem (with the full statement and proof left to Appendix~\ref{sec:proof_of_express_sep}). Note as discussed in Sec.~\ref{sec:stab_meas_task} that a CRNN can perform the stabilizer measurement translation task with $n$ qumodes of memory.
\begin{theorem}[Online stabilizer measurement translation memory lower bound, informal]
    Consider a locally Lipschitz online model with latent space $L$. If $\dim\left(L\right)<\frac{n\left(n-3\right)}{2}$, this model cannot achieve a finite backward cross entropy on the $\left(n+2,n\right)$ stabilizer measurement translation task.
    \label{thm:online_sep_inf}
\end{theorem}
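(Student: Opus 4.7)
The plan is to exhibit, for any locally Lipschitz online model with $\dim(L) < \frac{n(n-3)}{2}$, two input histories that the CRNN deterministically distinguishes but that the classical model must collapse to the same latent state, forcing $q\neq 0$ on an incorrect translation and hence an infinite backward cross entropy (per the convention stated after \eqnref{eq:emp_cross_ent_main_text}). First I would identify a natural input family: sequences of $n$ tokens $\bm{x_1},\ldots,\bm{x_n}$ describing pairwise commuting CV Pauli operators with exponent vectors spanning a Lagrangian subspace of $\mathbb{R}^{2n}$, together with measurement outcomes that are consistent with some underlying GKP state. For the CRNN each such history produces a well-defined GKP stabilizer state, and the total parameter space has dimension comparable to the Lagrangian Grassmannian $\Lambda(n)$ of dimension $\frac{n(n+1)}{2}$.

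Second, I would analyze the composition $\phi$ of the recurrent cells mapping this input family into $L$. Since each cell is locally Lipschitz, $\phi$ is locally Lipschitz, so its image has Hausdorff dimension at most $\dim(L)$; restricted to a compact subfamily (so that the local Lipschitz constant can be uniformly bounded), a coarea/rank argument then forces the generic fiber $\phi^{-1}(\phi(\ket{\psi_0}))$ to contain a local submanifold of codimension at most $\dim(L)$. This is precisely the picture in \figref{fig:proof_schematic}(a,b): whole submanifolds of distinct CRNN stabilizer states are squashed onto a single classical latent state. When $\dim(L)$ falls strictly below the threshold $\frac{n(n-3)}{2}$, the dimension of this fiber exceeds $2n$, which I would then use to select two CRNN states $\ket{\psi_1}\neq\ket{\psi_2}$ on the fiber whose stabilizer groups differ in a controlled, symplectically nontrivial direction.

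The core of the argument is the contextuality step: on any such fiber I would show that there exist two GKP states that can be completed, using only the remaining $2$ tokens $\bm{x_{n+1}},\bm{x_{n+2}}$, into a Mermin--Peres-style contextual configuration in the spirit of \tabref{table:cv_stab_example}, so that the joint outcome of the final two CV Pauli measurements is deterministic but disjoint for $\ket{\psi_1}$ and $\ket{\psi_2}$. The specific threshold $\frac{n(n-3)}{2}$ should fall out of this step as the symplectic dimension count needed to guarantee a contextual completion; making this genericity claim precise---essentially showing that, on a sufficiently high-dimensional fiber of GKP states, a pair admitting a two-token contextual extension always exists---is the main obstacle I anticipate. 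Once this is established, the conclusion is immediate: the classical model, seeing identical latent states for the two histories, must emit the same distribution over $\bm{y_{n+1}},\bm{y_{n+2}}$ in both cases, so it places positive mass on an output with $p=0$ for at least one of them, contradicting the assumed finiteness of the backward cross entropy.
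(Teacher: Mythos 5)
Your high-level skeleton matches the paper's: restrict to a family of stabilizer-preparing input histories, use local Lipschitzness and a constant-rank/fiber argument to force a positive-dimensional fiber of distinct quantum states that collapses to a single latent point, then use contextuality to construct a two-token distinguishing measurement and conclude an infinite backward cross entropy. However, the step you flag as ``the main obstacle'' is not merely a genericity claim to be made precise --- as you have formulated it, with \emph{two} states, it fails. For the joint outcome of the final two measurements to be deterministic on a state, the first measured operator must stabilize that state (up to phase); for a generic pair $\ket{\psi_1}\neq\ket{\psi_2}$ on the fiber, neither of two freely chosen final measurements has a deterministic outcome on either state, so no single-shot contradiction is forced. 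The paper's Lemma~\ref{lemma:graph_state_context} resolves this with \emph{three} states on the fiber: the base point $\ket{\bm{0}}_{\bm{\hat{q}}}$ together with two distinct loop-free CV graph states. The first token measures a stabilizer of $\ket{\bm{0}}_{\bm{\hat{q}}}$, pinning the model's (shared) output to $1$ on pain of erring on that third history; conditioned on outcome $1$, the post-measurement states of $\ket{\psi_1}$ and $\ket{\psi_2}$ are stabilized by $+t$ and $-t$ for the second token $t$, so the model's identical second output is wrong for one of them. The Mermin--Peres square of Table~\ref{table:cv_stab_example} is exactly the bookkeeping for these three rows of stabilizers.

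This three-state requirement is also where the threshold actually comes from, and your dimension count does not recover it. The paper works not with the Lagrangian Grassmannian but with the $\frac{n\left(n-1\right)}{2}$-dimensional family of hollow symmetric matrices $\bm{B}$ in the normal form $\left(\bm{B}+\bm{H}/2\mid\left\lVert\bm{B}\right\rVert_{\text{F}}\bm{I_n}\right)$, precisely because in that normal form the only redescription degeneracy --- distinct inputs preparing the same state --- is independent rescaling of the $n$ rows, of dimension at most $n$. Requiring the fiber dimension $\varDelta=\frac{n\left(n-1\right)}{2}-\dim\left(L\right)\geq n+1$ then guarantees a \emph{third} genuinely distinct state on the fiber and yields exactly $\dim\left(L\right)<\frac{n\left(n-3\right)}{2}$. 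If instead you parameterize by arbitrary commuting tuples spanning a Lagrangian subspace, the redescription degeneracy is a full $GL\left(n\right)$ change of spanning set, of dimension $n^2$, so a fiber of dimension exceeding $2n$ does not guarantee even two genuinely distinct states. To repair your argument you would need to pass to a normal form as the paper does (or subtract the $n^2$-dimensional degeneracy, which changes the threshold), and to upgrade your two-state contextual configuration to the three-state one.
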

\begin{proofs}
    The strategy of our proof is to show that, when the dimension of $L$ is less than $\frac{n\left(n-3\right)}{2}$, the model must map an embedded submanifold $K$ of the space of the first $n$ inputs to the same point in $L$; in other words, the model loses the ability to distinguish between inputs in $K$. The nontrivial aspect of this proof is to demonstrate that such a $K$ exists, where distinct points in $K$ yield different translations. As the model is unable to distinguish between points in $K$, this then demonstrates that the model will get a translation incorrect, corresponding to an infinite backward cross entropy on the stabilizer measurement translation task. This is equivalent to demonstrating that the quantum mechanical processes being described by points in $K$ yield quantum states that are single-shot distinguishable.
    
    To demonstrate that such a $K$ exists, we use the local Lipschitzness of the model and the constant rank theorem~\cite{butler_timourian_viger_1988}. This then implies that the map $\mathcal{F}$---given by the $n$-fold composition of the $\mathcal{F}_i$ as shown in Fig.~\ref{fig:online_contextual_model}(a)---locally induces a fiber bundle on the input space (as shown in Fig.~\ref{fig:proof_schematic}), where $\mathcal{F}$ can be considered a projection onto the base manifold of this induced fiber bundle. We will slightly abuse notation in the remainder of this proof sketch, and conflate the first $n$ inputs (and their associated outputs) with the quantum state that arises from this measurement sequence.
    
    We consider a fiber of this fiber bundle, with the goal of proving that there exist points in this fiber that are single-shot distinguishable. We show in Appendix~\ref{sec:proof_of_express_sep} that the dimension of this fiber is large enough such that there exist three states in this fiber with stabilizers which exhibit quantum contextuality. We claim (and prove in Appendix~\ref{sec:proof_of_express_sep}) that due to the presence of quantum contextuality in these stabilizers, these states have a distinguishing measurement sequence of length two. When performing this distinguishing measurement sequence, then, the model must give the incorrect measurement results for one of the three states, giving the lower bound on classical simulation. It is easy to see from Eq.~\eqref{eq:emp_cross_ent_main_text} that this yields both an infinite backward cross entropy when these sequences are in the data set $\mathcal{T}$.
    
    \begin{figure}
        \begin{center}
            \includegraphics[width=0.9\linewidth]{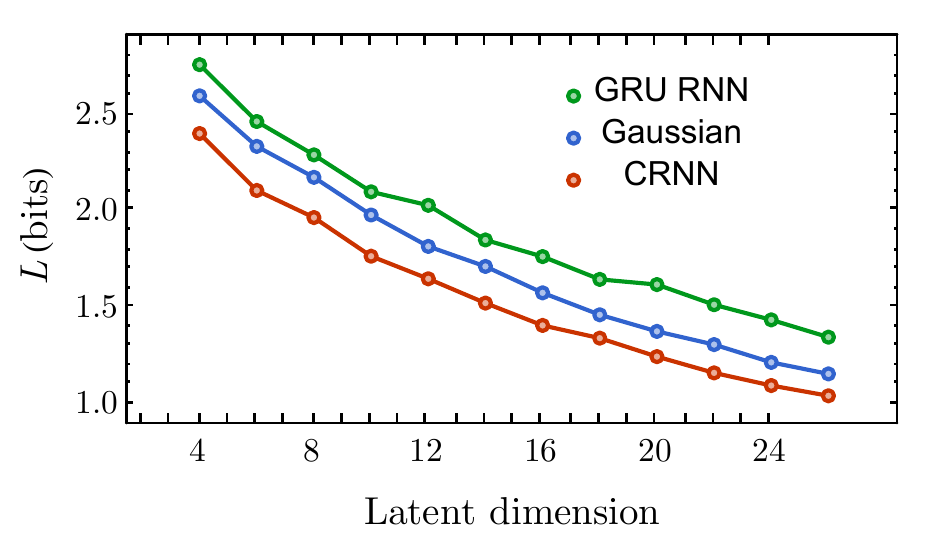}
            \caption{The forward empirical cross entropy ($L$) on a test set for a Spanish-to-English translation task as a function of the model dimension $n$ for GRU RNNs, Gaussian RNNs, and CRNNs. The models are constrained such that the Gaussian and CRNN models have an identical number of parameters. The recurrent cells of the GRU RNN and quantum models have a number of parameters within 2.5\% of each other at $n=26$.\label{fig:cross_entropy_same_params}}
        \end{center}
    \end{figure}
    \begin{table}
        \small
        \begin{center}
            \begin{tabular}{cc}
            \hline
            \textbf{Input}& ``Debemos limpiar la cocina.''\\
            \textbf{Truth}& ``We must clean up the kitchen.''\\
            \textbf{CRNN}& ``We must clean the kitchen.''\\
            \textbf{GRU}& ``We have to turn the right address.''\\
            \hline\hline
            \textbf{Input}& ``Admití que estaba equivocada.''\\ 
            \textbf{Truth}& ``I admitted that I was wrong.''\\
            \textbf{CRNN}& ``I was wrong to say that.''\\
            \textbf{GRU}& ``They had a thing to be true.''\\
            \hline \hline
            \textbf{Input}& ``¿Cual es el lugar m\'{a}s bonito del mundo?''\\
            \textbf{Truth}& ``What’s the most beautiful place in the world?''\\
            \textbf{CRNN}& ``What’s the world largest place?''\\
            \textbf{GRU}& ``What’s the best of is in?''
            \\
            \hline\hline
            \textbf{Input}& ``La caja es pesada.''\\
            \textbf{Truth}& ``The box is heavy.''\\
            \textbf{CRNN}& ``The box is heavy.''\\
            \textbf{GRU}& ``My box is.''
            \\
            \hline
            \end{tabular}
        \caption{Random samples of translation results for $n=26$ models.\label{tab:samples}}
        \end{center}
    \end{table}
    As a simple example of this phenomenon, assume that three states $\ket{\psi_1},\ket{\psi_2},\ket{\psi_3}$ with classical representations in the same fiber are stabilized respectively by the rows of Table~\ref{table:cv_stab_example} for some real $\alpha\neq 0$. As $\ket{\psi_3}$ is stabilized by $Z_1\left(\frac{\cpi}{2\alpha}\right)^\dagger Z_2\left(\frac{\cpi}{2\alpha}\right)^\dagger$, upon measuring this operator, the measurement result is constrained to be $1$ for the simulation to be accurate. The post-measurement state of $\ket{\psi_1}$ is then stabilized by $X_1\left(\alpha\right)X_2\left(\alpha\right)$ and $Z_1\left(\frac{\cpi}{2\alpha}\right)^\dagger Z_2\left(\frac{\cpi}{2\alpha}\right)^\dagger$. In particular, it is also stabilized by $X_1\left(\alpha\right)Z_1\left(\frac{\cpi}{2\alpha}\right)^\dagger X_2\left(\alpha\right)Z_2\left(\frac{\cpi}{2\alpha}\right)^\dagger$. Conversely, the post-measurement state of $\ket{\psi_2}$ is stabilized by $-X_1\left(\alpha\right)Z_1\left(\frac{\cpi}{2\alpha}\right)^\dagger X_2\left(\alpha\right)Z_2\left(\frac{\cpi}{2\alpha}\right)^\dagger$. Thus, then measuring $X_1\left(\alpha\right)Z_1\left(\frac{\cpi}{2\alpha}\right)^\dagger X_2\left(\alpha\right)Z_2\left(\frac{\cpi}{2\alpha}\right)^\dagger$ gives an incorrect translation for one of these states. This measurement sequence is single-shot, as only a single copy of the state being measured is used.
\end{proofs}

Our results show that there is a general $n$ versus $\operatorname{\Omega}\left(n^2\right)$ bound in the memory requirements of contextual and classical models performing the stabilizer measurement translation task. In practice, this can yield an even greater separation in time complexity for given implementations of these models, as the time complexity of inference using classical models is typically superlinear in the model size. We discuss this in more detail in Appendix~\ref{sec:time_complexity}.

\section{Numerical Experiments}\label{sec:numerics}

We now showcase the practical benefit of finding an interpretable advantage in the expressivity of our quantum model. Namely, it is able to give us intuition as to which data sets---beyond the constructed data set used in our proof---a CRNN may outperform classical machine learning models on. As previously discussed, the contextuality present in quantum operators behaves qualitatively similar to the linguistic contextuality present in language. That is, words can have one of many meanings, and their exact definition only becomes apparent when considering their context in a sequence. This is important for translation tasks, where different meanings of a single word in one language have different translations in other languages.

To explore this intuition, we consider the application of a CRNN on a standard Spanish-to-English translation data set~\cite{spaengdata}, with trainable Gaussian interactions within each recurrent cell. We also consider the performance of GRUs~\cite{cho-etal-2014-learning} in a seq2seq learning framework~\cite{10.5555/2969033.2969173}, and Gaussian models (with Gaussian measurements). Details of our numerical simulations for all of the models we consider are given in Appendix~\ref{sec:numerics_deets}, along with details of the architectures. We also discuss in Appendices~\ref{sec:sim_deets} and~\ref{sec:numerics_deets} a $\operatorname{\Theta}\left(n^2\right)$ memory classical simulation of CRNNs with $n$ latent modes on a restricted space of Gaussian operations, which is what we use in our numerical simulations. The Gaussian and contextual models were constrained to have exactly the same number of trainable parameters, and each recurrent cell of the GRU had a parameter count within $2.5\%$ of those of the quantum models at the largest model size considered. In Fig.~\ref{fig:cross_entropy_same_params}, we plot the final training performance of all of our models. It is easy to see that the contextual model outperforms all models under consideration in forward empirical cross entropy at a wide range of model dimensions $n$. Random samples of translation results after training are shown in Table~\ref{tab:samples}.

We also compared the performance of CRNNs against classical linear RNNs (LRNNs)~\cite{pmlr-v70-jing17a} and Transformers~\cite{10.5555/3295222.3295349}. We found that CRNNs substantially outperformed LRNNs. Though the memory requirements of Transformers grow with the length of the input---making direct comparisons against CRNNs difficult---we found that roughly the performance of the largest CRNNs we considered was matched by Transformers with quadratically more memory. We give details of these results in Appendix~\ref{sec:sup_nums}.

\section{Outlook}\label{sec:outlook}

Our results pinpoint quantum contextuality as a resource that can be used to enhance traditional machine learning models. We achieved this by constructing a sequence learning task parameterized by $n$ that a contextual quantum model (a CRNN) of size $n$ is able to model, yet provably no classical neural networks of size $\operatorname{o}\left(n^2\right)$ can model due to their noncontextuality. To our knowledge, this is the first unconditional proof of an expressivity separation between a quantum neural network and classical neural networks on classical data. By explicitly demonstrating that quantum contextuality is the source of this advantage, we are also able to provide intuition as to which classes of problems CRNNs are able to outperform traditional machine learning models in solving. Our numerics confirm the intuition that CRNNs perform extremely well on problems exhibiting linguistic contextuality, such as the Spanish-to-English translation task we consider here.

The simple structure of CRNNs also allow (finite precision approximations of) them to be more amenable to potential experimental implementations when compared with completely general quantum architectures. In particular, all operations in the contextual model are Gaussian, up to the requirement for interactions with fixed ancilla states to perform the required non-Gaussian measurements. The restricted nature of the model may also circumvent the poor training landscapes of generic quantum neural networks~\cite{mcclean2018barren,cerezo2021cost,marrero2021entanglement,napp2022quantifying,anschuetz2021critical,you2021exponentially,anschuetz2022barren}, though we leave further investigation of this to future work.

We believe that the specifics of the CRNN architecture can be relaxed somewhat. Due to recent results linking non-Gaussian operations to quantum contextuality~\cite{booth2021,haferkamp2021}, we suspect any non-Gaussian measurement would make a suitable replacement for the stabilizer measurements we consider here for technical reasons. We also suspect that the technical requirement that the measurements be made with infinite precision to be an artifact of the nature of our proof, which compares the quantum architecture with infinite precision classical models. We believe that in practice, performing phase estimation using ancilla qubits instead of GKP (or other non-Gaussian CV) states is all that is necessary for a practical separation. In fact, such a finite precision implementation may counterintuitively yield a \emph{larger} quantum advantage, as our architecture implemented with a finitely squeezed initial Gaussian state is universal for CV quantum computation~\cite{PhysRevLett.123.200502,calcluth2022}. We discuss these two points in more detail in Appendix~\ref{sec:exp_consds}.

CRNNs demonstrate that even the quantization of a very simple class of classical architectures---here, the class of LRNNs---is able to outperform a wide range of classical models on certain tasks, even if the classical models are much more powerful than LRNNs. We leave for future work the quantization of more powerful classical architectures, which may achieve a practical quantum advantage on a wider variety of tasks than we consider here.

\begin{acknowledgments}
    We thank Li Jing for sharing with us the code used in \cite{pmlr-v70-jing17a}. We also thank Honghao Fu, Pranav Gokhale, Liang Jiang, Robert Huang, Bobak T. Kiani, Seth Lloyd, Mikhail D. Lukin, and Quntao Zhuang for insightful discussions. E.R.A.\ is supported by the National Science Foundation Graduate Research Fellowship Program under Grant No.\ 4000063445, and a Lester Wolfe Fellowship and the Henry W. Kendall Fellowship Fund from M.I.T. H.Y.H.\ is supported by the UC Hellman
    Fellowship and the Harvard Quantum Initiative Fellowship. J.L.H.\ is supported by the Simons Collaboration on Ultra-Quantum Matter, which is a grant from the Simons Foundation (652264). X.G.\ is supported by the Postdoctoral Fellowship in Quantum Science of the MPHQ, the Templeton Religion Trust Grant No.\ TRT 0159, and by the Army Research Office under Grant No.\ W911NF1910302 and MURI Grant No.\ W911NF2010082.
\end{acknowledgments}

\bibliography{main}

\appendix

\onecolumngrid

\begin{spacing}{1.5}
    \section{Background on Sequence Learning}\label{sec:seq_learning_background}
    
    \subsection{Sequence Learning}

    \emph{Sequence-to-sequence} or \emph{sequence} learning~\cite{10.5555/2969033.2969173} is the approximation of some given conditional distribution $p\left(\bm{y}\mid\bm{x}\right)$ with a model distribution $q\left(\bm{y}\mid\bm{x}\right)$. This framework encompasses sentence translation tasks~\cite{10.5555/2969033.2969173}, speech recognition~\cite{Prabhavalkar2017}, image captioning~\cite{Vinyals_2015_CVPR}, and many more practical problems.
    
    The training and evaluation of sequence-to-sequence models is most often performed on the \emph{forward (conditional) cross entropy}:
    \begin{equation}
        H\left(p,q\right)=-\int\dd{\bm{x}}\int\dd{\bm{y}}p\left(\bm{x},\bm{y}\right)\log\left(q\left(\bm{y}\mid\bm{x}\right)\right).
    \end{equation}
    Here, ``forward'' indicates the ordering of the arguments of $H$; the \emph{backward cross entropy} is given by $p\leftrightarrow q$. Given a finite test set $\mathcal{T}=\left\{\left(\bm{x_i},\bm{y_i}\right)\right\}$ of $M$ points sampled from $p\left(\bm{x},\bm{y}\right)$, we can also define the forward \emph{empirical} cross entropy
    \begin{equation}
        \hat{H}\left(p,q\right)=-\frac{1}{M}\sum\limits_{\left(\bm{x},\bm{y}\right)\in\mathcal{T}}p\left(\bm{y}\mid\bm{x}\right)\log\left(q\left(\bm{y}\mid\bm{x}\right)\right),
        \label{eq:emp_cross_ent}
    \end{equation}
    with the backward empirical cross entropy once again given by $p\leftrightarrow q$.
    
    Historically, sequence learning was performed using Bayesian networks such as hidden Markov models~\cite{pearl1985bayesian,10.1214/aoms/1177699147}. However, in recent years, the performance of these models have been eclipsed by neural network based models.
    
    \subsection{Neural Sequence Models}
    
    \begin{figure}
        \begin{center}
            \includegraphics[width=0.4\linewidth]{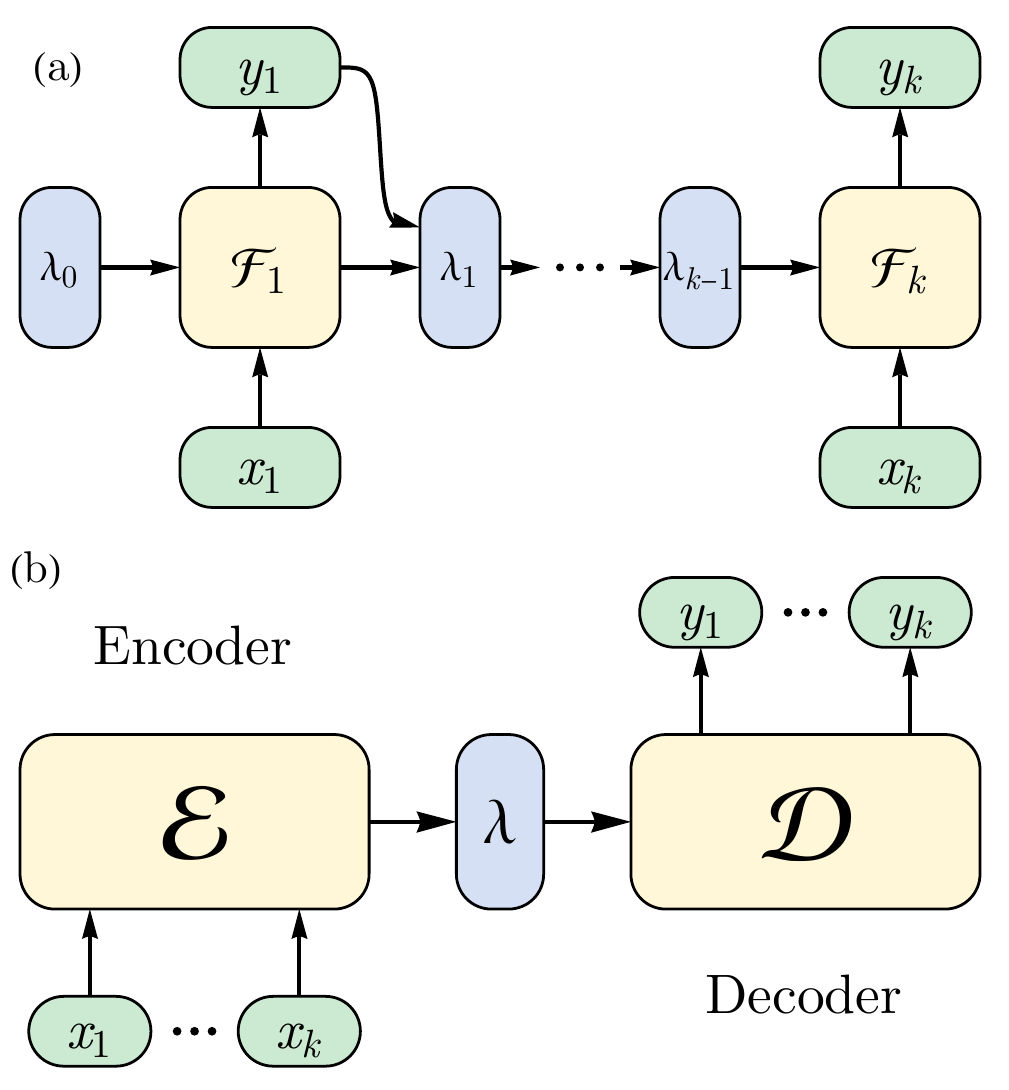}
            \caption{(a) An online neural sequence model. The model autoregressively takes input tokens $\bm{x_i}$, and outputs decoded tokens $\bm{y_i}$, with map $\mathcal{F}_i$. The model also has an unobserved internal memory with state $\bm{\lambda_i}\in L$ after decoding token $i$ that $\mathcal{F}_{i+1}$ can depend on. (b) A general encoder-decoder model. $\mathcal{E}$ encodes the input $\bm{x}$ into some latent representation $\bm{\lambda}\in L$. A decoder $\mathcal{D}$ then outputs the decoded sequence $\bm{y}$.\label{fig:encoder_decoder_model}}
        \end{center}
    \end{figure}
    Sequence modeling today is typically performed using neural network based generative models, or \emph{neural sequence models}. Generally, these models are parameterized functions that take as input the sequence $\bm{x}$ and output a sample from the conditional distribution $p\left(\bm{y}\mid\bm{x}\right)$; the parameters of these functions are trained to minimize an appropriate loss function, such as the empirical cross entropy of Eq.~\eqref{eq:emp_cross_ent}.
    
    To maintain a resource scaling independent of the input sequence length, neural sequence models usually are one of two classes: \emph{online sequence models} (also known as \emph{autoregressive sequence models})~\cite{Hopfield2554,10.1162/neco.1997.9.8.1735,cho-etal-2014-learning}, or \emph{encoder-decoder models}~\cite{10.5555/2969033.2969173,10.5555/3295222.3295349}. Examples of both are given in Fig.~\ref{fig:encoder_decoder_model}. In the former class of models, input tokens $\bm{x_i}$ are translated in sequence to output tokens $\bm{y_i}$ via functions $\mathcal{F}_i$. An unobserved internal memory (or \emph{latent space}) shared between time steps allows the model to represent long-range correlations in the data.
    
    In the latter of these models, an encoder $\mathcal{E}$ maps the input sequence $\bm{x}$ to a latent space representation $\bm{\lambda}\in L$; then, a decoder $\mathcal{D}$ transforms this representation to the output sentence $\bm{y}$. The advantage of encoder-decoder models over generic representations of $p\left(\bm{y}\mid\bm{x}\right)$ is the improved time complexity when considering a lower dimensional representation $\bm{\lambda}$ of $\bm{x}$. When the encoder map is trivial (i.e. when $L$ is congruent to the input space and $\mathcal{E}$ is the identity), then no compression occurs, and the model is equivalent to a general representation of $p\left(\bm{y}\mid\bm{x}\right)$ given by $\mathcal{D}$.
    
    Generally, there are no restrictions on the forms of $\mathcal{F}_i$, $\mathcal{E}$, or $\mathcal{D}$, though most neural sequence models are composed of simple smooth (or almost everywhere smooth) functions out of training considerations~\cite{Hopfield2554,10.1162/neco.1997.9.8.1735,cho-etal-2014-learning,10.5555/3295222.3295349}. Here, we generalize from the typical smoothness constraints and consider \emph{locally Lipschitz} maps. A function $\mathcal{F}:K\to L$ for metric spaces $K$ and $L$ is locally Lipschitz if
    \begin{equation}
        d_L\left(\mathcal{F}\left(\bm{x}\right),\mathcal{F}\left(\bm{x'}\right)\right)\leq C_{\bm{x}}d_K\left(\bm{x},\bm{x'}\right),
    \end{equation}
    where $C_{\bm{x}}$ is constant in some neighborhood of $\bm{x}$. Here, $d_S$ is the distance function on the metric space $S$.
    
    All practical neural sequence models are locally Lipschitz. Indeed, assuming $L=\mathbb{R}^m$, all maps that are almost everywhere differentiable with locally bounded Jacobian norm are locally Lipschitz~\cite{Federer1996}. Realistically, then, locally Lipschitz models can be thought of as all models trainable using gradient based methods; equivalently, they can be thought of as models trainable via methods not arbitrarily sensitive to local noise $\bm{x_i}\mapsto\bm{x_i}+\bm{\epsilon}$.

    \section{Proofs of Expressivity Separations}\label{sec:proof_of_express_sep}
    
    Before giving proofs of expressivity separations between our quantum model and classical models, we first give a formal definition of the translation task we will prove a separation on: namely, \emph{$\left(k,n\right)$ stabilizer measurement translation}, parameterized by $k$ and $n$. Note that the technical description of the task described here is a certain limit of the construction presented in the main text. First, we take $k\to k-n$, as our task is only explicitly defined when the sequence length is at least $n$. For instance, the $\left(n+2,n\right)$ stabilizer measurement translation task as presented in the main text will, here, be referred to as the $\left(2,n\right)$ stabilizer measurement translation task. We make this change as the formal definition of this task as presented here is undefined when $k+n<n$. Second, we now set the first $n$ measurements to be infinite precision Gaussian measurements, i.e. measurements of linear combinations of position and momentum operators. These measurements are a limit of the periodic measurements we consider in the main text, with infinitely large periods.
    
    To be more explicit, we consider an input language given by $n+k$-long sequences of linear combinations of position and momentum operators on $n$ modes. Specifically, input sentences are composed of words which are of the form of rows of:
    \begin{equation}
        \bm{x}=\begin{pmatrix}
        s_{1,1}^q & \ldots & s_{1,n}^q & s_{1,1}^p & \ldots & s_{1,n}^p\\
        \vdots & \vdots & \vdots & \vdots & \vdots & \vdots\\
        s_{n+k,1}^q & \ldots & s_{n+k,n}^q & s_{n+k,1}^p & \ldots & s_{n+k,n}^p
        \end{pmatrix}.
    \end{equation}
    The first $n$ rows describe the sequential measurement of each operator
    \begin{equation}
        \hat{s}_i=\sum\limits_{j=1}^n s_{ij}^q\hat{q}_j+\sum\limits_{j=1}^n s_{ij}^p\hat{p}_j
    \end{equation}
    when beginning in some given fixed state $\ket{\psi_0}$ on $n$ modes that is either a GKP state~\cite{PhysRevA.64.012310} or an infinitely squeezed Gaussian state, which maintains the nonuniversality of the model~\cite{calcluth2022}. The final $k$ rows describe the sequential measurement of each operator
    \begin{equation}
        \hat{s}_i=\exp\left(\ci\sum\limits_{j=1}^n s_{ij}^q\hat{q}_j+\ci\sum\limits_{j=1}^n s_{ij}^p\hat{p}_j\right)
    \end{equation}
    via e.g. phase estimation, as shown in Fig.~\ref{fig:online_contextual_model}(b). Note that the measurement of $\hat{s}_i$ is \emph{not} equivalent to the measurement of its generator. For instance, given states such that:
    \begin{equation}
        \hat{q}\ket{\psi_1}=0,\hspace{1cm}\hat{q}\ket{\psi_2}=2\cpi,
    \end{equation}
    one has
    \begin{equation}
        \left(\exp\left(\ci\hat{q}\right)-1\right)\ket{\psi_1}=\left(\exp\left(\ci\hat{q}\right)-1\right)\ket{\psi_2}=0.
    \end{equation}
    
    A translation $\bm{y}$ of $\bm{x}$ is considered correct if it is of the form
    \begin{equation}
        \bm{b}=\begin{pmatrix}
        m_1\\
        \vdots\\
        m_{n+k}
        \end{pmatrix},
    \end{equation}
    where the measurement outcomes $m_i$ are consistent with those of quantum mechanics. To prove our separations, we will consider input sentences that exhibit quantum contextuality.
    
    \subsection{Expressivity Separation for Online Models}\label{sec:online_sep}
    
    We now show that locally Lipschitz online with latent space dimension less than $\frac{n\left(n-3\right)}{2}$ can stabilizer measurement translate. Our general proof strategy is as follows:
    \begin{enumerate}
        \item We first define a potentially random online learning model with locally Lipschitz cell maps $\mathcal{F}_i^{\bm{r}}\left(\bm{s}_i,\bm{\lambda_{i-1}}\right)=\left(m_i,\bm{\lambda_i}\right)$, where we use $m_i$ to indicate the measurement result when measuring the nullifier described by $\bm{s_i}$, and $\bm{r}$ is a random vector shared between all $\mathcal{F}_i$. We assume for any $\bm{r}$ that $\mathcal{F}_i^{\bm{r}}$ is deterministic. Let $\mathcal{F}^{\bm{r}}\left(\bm{s_1},\ldots,\bm{s_n}\right)=\bm{\lambda_n}$ be the $n$-fold composition of $\mathcal{F}_i^{\bm{r}}$ on some fixed initial $\bm{\lambda_0}$ (where any $m_i$ is implicit, as each $m_i$ is fully determined by $\bm{r}$ and $\bm{s_i}$). Note that once $\bm{r}$ is specified, $\mathcal{F}_{\bm{r}}$ is a deterministic function. Due to this, in the following, we take the $\bm{r}$ dependence to be implicit.
        \item We assume the dimension of $\bm{\lambda_i}$ is less than $\frac{n\left(n-3\right)}{2}$. We prove that then, the described online model will give a wrong measurement outcome on the final two measurement results $m_{n+1},m_{n+2}$. In the following, we will refer to this as \emph{the theorem statement}.
        \item To prove that \emph{the theorem statement} is true, we show that it is true for a subspace $K$ of inputs which describe \emph{CV graph states}, where the associated graphs have no self-loops. It is easy to see that the dimension of such a space is $\frac{n\left(n-1\right)}{2}$, by studying the space of adjacency matrices. We let $\left.\mathcal{F}\right|_K$ be $\mathcal{F}$ restricted to this space $K$. Let $\bm{B}$ be coordinates of $K$, as defined in Eq.~\eqref{eq:q_def}. We assume the Jacobian of this map achieve its maximal rank at $\bm{B}=\bm{0}$, which describes a set of measurements yielding the position squeezed state $\ket{\bm{0}}_{\bm{\hat{q}}}$. By doing so, we guarantee the robustness of the Jacobian rank in a neighborhood of $\bm{B}=\bm{0}$. The assumption that $\bm{B}=\bm{0}$ is a point of maximal rank is taken WLOG, as there exist Gaussian operations that transform the $\bm{B}$ at which the Jacobian of $\left.\mathcal{F}\right|_K$ attains its maximal rank to $\bm{B}=\bm{0}$.
        \item As the rank is constant in the neighborhood of $\bm{B}=\bm{0}$, by the constant rank theorem~\cite{butler_timourian_viger_1988}, $\left.\mathcal{F}_{\bm{r}}\right|_K$ induces a fiber bundle structure in the neighborhood of $\bm{B}=\bm{0}$. That is, $\left.\mathcal{F}_{\bm{r}}\right|_K$ is a projection of fibers in a neighborhood of $\bm{B}=\bm{0}$ to their base points. This means that the model is unable to distinguish between points that share a fiber in this neighborhood.
        \item We then show that when $\dim\left(\bm{\lambda_n}\right)<\dim\left(K\right)-n$, there exist $\bm{B'},\bm{B''}$ on the fiber with base point $\bm{B}=\bm{0}$ such that $\bm{B},\bm{B'},\bm{B''}$ describe distinct states. We show in Lemma~\ref{lemma:graph_state_context} that these states have stabilizers which share quantum contextuality, yielding distinguishing one-shot measurement sequences. As the model is unable to distinguish between $\bm{B},\bm{B'},\bm{B''}$, this implies that there exist $\bm{s_{n+1}},\bm{s_{n+2}}$ describing this distinguishing measurement sequence, that the model returns the wrong measurement result for at least one of $\bm{B},\bm{B'},\bm{B''}$ with certainty. This proves \emph{the theorem statement}. We also use this general proof strategy when considering encoder-decoder models in Theorem~\ref{thm:enc_dec_sep}, up to some minor details.
    \end{enumerate}
    
    With our proof strategy now clear, we now proceed to prove the details. First, we prove our lemma demonstrating that indeed, the stabilizer operators we consider exhibit quantum contextuality. We also show that this contextuality induces a one-shot distinguishing measurement sequence on the states stabilized by the given operators.
    \begin{lemma}[CV graph state stabilizers exhibit quantum contextuality]
        Consider $\ket{\bm{0}}_{\bm{\hat{q}}}$, the state nullified by all $\hat{q}_i$. Consider two states $\ket{\psi_1}$ and $\ket{\psi_2}$ that are CV graph states (up to arbitrary phases on their stabilizers) with no loops with distinct (modulo $\cpi$) adjacency matrices. There exist operators that stabilize these three states that exhibit quantum contextuality. Furthermore, there exists a distinguishing measurement given by one of the stabilizers of $\ket{\bm{0}}_{\bm{\hat{q}}}$ that maps $\ket{\psi_1}$ and $\ket{\psi_2}$ to orthogonal post-measurement states when the measurement result is $1$; in other words, there exists a distinguishing measurement sequence of length two that distinguishes these three states.
        \label{lemma:graph_state_context}
    \end{lemma}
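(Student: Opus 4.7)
The approach is to reduce the problem to the two-mode subsystem where the adjacency matrices differ, construct a Mermin--Peres-style magic square analogous to Table~\ref{table:cv_stab_example}, and read off from it a length-two distinguishing measurement sequence. Since $A^{(1)}$ and $A^{(2)}$ are symmetric with zero diagonal and differ modulo $\cpi$, there exist distinct modes $i,j$ with $\Delta\equiv A^{(1)}_{ij}-A^{(2)}_{ij}\not\equiv 0\pmod{\cpi}$. I then fix a real $\alpha$ satisfying $\alpha^2\Delta\equiv\tfrac{\cpi}{2}\pmod{\cpi}$, which is always achievable when $\Delta\not\equiv 0\pmod{\cpi}$ and is designed so that the CV commutation relation $X(a)Z(b)=\ce^{-2\ci ab}Z(b)X(a)$ produces the critical sign $\ce^{-2\ci\alpha^2\Delta}=-1$. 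All stabilizer operators I consider will act as the identity on the remaining $n-2$ modes, so the whole argument can be localized to modes $i$ and $j$.

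Next, I would identify, for each $r\in\{1,2\}$, two natural stabilizer generators of $\ket{\psi_r}$ on the two-mode subsystem of the form $S^{(r)}_1\propto X_i(\alpha)Z_j(\alpha A^{(r)}_{ij})$ and $S^{(r)}_2\propto X_j(\alpha)Z_i(\alpha A^{(r)}_{ij})$, together with the stabilizer $C\equiv Z_i(\alpha\Delta)Z_j(\alpha\Delta)$ of $\ket{\bm{0}}_{\bm{\hat{q}}}$ and its two factors $Z_i(\alpha\Delta),Z_j(\alpha\Delta)$. Direct CV commutator bookkeeping verifies two things: first, that each row product $S^{(r)}_1 S^{(r)}_2$ commutes with $C$ because the accumulated $X$--$Z$ phases cancel pairwise, and second, that
\begin{equation}
    \bigl(S^{(1)}_1 S^{(1)}_2\bigr)\bigl(S^{(2)}_1 S^{(2)}_2\bigr)^{-1}=-\,C,
\end{equation}
where the $-1$ is precisely the sign $\ce^{-2\ci\alpha^2\Delta}$. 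Arranging these nine operators in a $3\times3$ array as in Table~\ref{table:cv_stab_example} then reveals the same sign mismatch between row products and column products, establishing quantum contextuality of the chosen stabilizers.

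For the distinguishing measurement sequence, the first measurement is $C$, which stabilizes $\ket{\bm{0}}_{\bm{\hat{q}}}$ with eigenvalue $+1$. Although the individual generators $S^{(r)}_1$ and $S^{(r)}_2$ anticommute with $C$, their product commutes with $C$ and therefore remains a stabilizer of the post-measurement state of $\ket{\psi_r}$ conditioned on outcome $+1$. Restricting the sign identity to the $+1$-eigenspace of $C$ and substituting $S^{(1)}_1 S^{(1)}_2\ket{\psi_1^{\mathrm{pm}}}=+\ket{\psi_1^{\mathrm{pm}}}$ and $S^{(2)}_1 S^{(2)}_2\ket{\psi_2^{\mathrm{pm}}}=+\ket{\psi_2^{\mathrm{pm}}}$ then forces $S^{(2)}_1 S^{(2)}_2\ket{\psi_1^{\mathrm{pm}}}=-\ket{\psi_1^{\mathrm{pm}}}$, so the two post-measurement states lie in orthogonal eigenspaces of $S^{(2)}_1 S^{(2)}_2$. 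Measuring $S^{(2)}_1 S^{(2)}_2$ as the second step is then a deterministic distinguishing observable, completing the length-two sequence and proving the lemma.

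The step I expect to be the main obstacle is controlling the arbitrary overall phases the lemma permits on the stabilizer generators of $\ket{\psi_1}$ and $\ket{\psi_2}$, since a naive analysis could see these phases shifting the sign identity and masking the contextual $-1$. The resolution I would pursue is that the $-1$ originates from the Heisenberg--Weyl commutator $\ce^{-2\ci\alpha^2\Delta}$, which is insensitive to any overall phase on the generators; any induced global phase appears identically on both sides of the identity, and any residual phase ambiguity in the distinguishing step can be absorbed by rescaling the generators or by swapping the role of $S^{(1)}$ and $S^{(2)}$ in the second measurement. A secondary subtlety is that $\ket{\bm{0}}_{\bm{\hat{q}}}$ is nonnormalizable, but the projector onto its $C=+1$ eigenspace is still well defined and the orthogonality claim is a purely stabilizer-algebraic consequence of the sign identity rather than a statement about normalized wavefunctions.
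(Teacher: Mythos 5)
Your overall strategy is the same as the paper's: locate modes $i,j$ where the adjacency matrices differ modulo $\cpi$, rescale by an $\alpha$ chosen so that the Heisenberg--Weyl commutator contributes $\ce^{-2\ci\alpha^2\Delta}=-1$, assemble a Mermin--Peres square from stabilizers of $\ket{\psi_1}$, $\ket{\psi_2}$, and $\ket{\bm{0}}_{\bm{\hat{q}}}$, and then measure the $Z$-type stabilizer of $\ket{\bm{0}}_{\bm{\hat{q}}}$ followed by a row product. Two corrections are needed. The minor one: a CV graph-state stabilizer at vertex $i$ is $X_i\left(1\right)\bm{Z}\left(\bm{e_i}\right)$ with $Z$ factors on \emph{every} neighbor of $i$, so your operators cannot literally ``act as the identity on the remaining $n-2$ modes''; you must carry the full $\bm{Z}$ strings (and use the symmetry of the adjacency matrix to get the partner stabilizer at vertex $j$). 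This is harmless—the extra $Z$ factors commute with everything relevant, so the commutator bookkeeping still localizes to modes $i,j$—but as written your operators are not stabilizers of the given states, and your $C$ should be the full product $s_i s_j\propto\bm{Z}\left(\bm{e_i}'-\bm{e_i}''\right)\bm{Z}\left(\bm{e_j}'-\bm{e_j}''\right)$ rather than a two-mode truncation.

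The substantive gap is exactly the one you flagged: the arbitrary phases $\ce^{2\ci\theta'},\ce^{2\ci\theta''},\ce^{2\ci\phi'},\ce^{2\ci\phi''}$ on the stabilizers of $\ket{\psi_1}$ and $\ket{\psi_2}$. These do \emph{not} ``appear identically on both sides'': the key identity becomes $s_i s_j s_i'' s_j''=-\ce^{-2\ci\left(\theta'-\theta''+\phi'-\phi''\right)}s_i's_j'$, so after measuring $s_is_j=1$ the eigenvalues of the second observable $s_i's_j'$ on the two post-measurement states are $1$ and $-\ce^{2\ci\left(\theta'-\theta''+\phi'-\phi''\right)}$. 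These coincide precisely when $\theta'-\theta''+\phi'-\phi''\equiv\frac{\cpi}{2}\pmod{\cpi}$, and in that case your sequence fails to distinguish the states; swapping which state's row product is measured second is symmetric in the phases and does not help, and ``rescaling the generators'' would require a separate density argument showing some admissible $\alpha$ simultaneously fixes the commutator sign and avoids the bad phase—which you do not supply. The paper closes this case differently: if the phase sum is $\frac{\cpi}{2}\pmod{\cpi}$ then at least one of $\theta'-\theta''$, $\phi'-\phi''$ is nonzero modulo $\cpi$, and one instead measures the corresponding single-vertex stabilizer $s_j$ (or $s_i$) of $\ket{\bm{0}}_{\bm{\hat{q}}}$ first, after which $s_js_j''=\ce^{-2\ci\left(\phi'-\phi''\right)}s_j'$ forces distinct eigenvalues of $s_j'$ on the two post-measurement states. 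You need this case split (or an equivalent) for the lemma to hold as stated, since in its intended application the phases are set by earlier measurement outcomes and cannot be assumed away.
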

    \begin{proof}
        As $\ket{\psi_1}$ and $\ket{\psi_2}$ are CV graph states with distinct adjacency matrices (modulo $\cpi$), they must differ (modulo $\cpi$) in the edges $\bm{e_i}$ touching some vertex $i$. That is, $\ket{\psi_1}$ is stabilized by some:
        \begin{equation}
            s_i'=\ce^{2\ci\theta'}X_i\left(1\right)\bm{Z}\left(\bm{e_i}'\right),
        \end{equation}
        and $\ket{\psi_2}$ by some:
        \begin{equation}
            s_i''=\ce^{2\ci\theta''}X_i\left(1\right)\bm{Z}\left(\bm{e_i}''\right),
        \end{equation}
        where $\bm{e_i}'$ and $\bm{e_i}''$ differ (modulo $\cpi$) in some element indexed by $j\neq i$ (as there are no loops in either graph), and where $2\theta',2\theta''$ are phases. Here, $\bm{Z}\left(\cdot\right)$ is defined as the tensor product:
        \begin{equation}
            \bm{Z}\left(\bm{v}\right)=\bigotimes_i Z_i\left(v_i\right).
        \end{equation}
        \begin{figure}
            \begin{center}
                \includegraphics[width=0.3\linewidth]{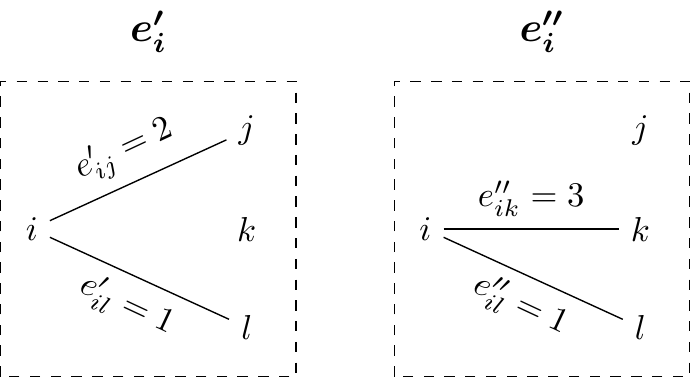}
                \caption{An example of two graphs, with edges leaving vertex $i$ given by $\bm{e_i}'$ and $\bm{e_i}''$, respectively. As the two graphs differ, they must differ by an edge; indeed, they differ in the weights of both $\left\langle i,j\right\rangle$ and $\left\langle i,k\right\rangle$.\label{fig:graph_state_edges}}
            \end{center}
        \end{figure}
        An example diagram of the entries of $\bm{e_i}',\bm{e_i}''$ is given in Fig.~\ref{fig:graph_state_edges}. By the symmetry of CV graph state adjacency matrices, we thus have that the former is also stabilized by
        \begin{equation}
            s_j'=\ce^{2\ci\phi'}X_j\left(1\right)\bm{Z}\left(\bm{e_j}'\right)
        \end{equation}
        and the latter by
        \begin{equation}
            s_j''=\ce^{2\ci\phi''}X_j\left(1\right)\bm{Z}\left(\bm{e_j}''\right),
        \end{equation}
        where $\bm{e_j}'$ and $\bm{e_j}''$ differ (modulo $\cpi$) in some element indexed by $i$, and where $2\phi',2\phi''$ are phases. Note in particular that we have the commutation relations
        \begin{align}
            \left[s_i',s_i''\right]=0,\hspace{1cm}\left[s_j',s_j''\right]=0,\hspace{1cm}&\left[s_i',s_j'\right]=0,\hspace{1cm}\left[s_i'',s_j''\right]=0,\\
            s_i' s_j''=\ce^{2\ci\zeta}s_j''s_i',\hspace{1cm}s_j' s_i''=\ce^{2\ci\zeta}s_i''s_j',\hspace{1cm}&s_i' s_j^{\prime\prime\dagger}=\ce^{-2\ci\zeta}s_j^{\prime\prime\dagger}s_i',\hspace{1cm}s_j' s_i^{\prime\prime\dagger}=\ce^{-2\ci\zeta}s_i^{\prime\prime\dagger}s_j',
        \end{align}
        where
        \begin{equation}
            \zeta\equiv e_{ij}'-e_{ij}''=e_{ji}'-e_{ji}''\neq 0\pmod{\cpi}.
        \end{equation}
        As $\zeta\neq 0\pmod{\cpi}$, there exists some $\alpha\in\mathbb{R}_+^\ast$ such that
        \begin{align}
            \left\{s_i^{\prime\alpha},s_j^{\prime\prime\alpha}\right\}&=0,&\left\{s_j^{\prime\alpha},s_i^{\prime\prime\alpha}\right\}&=0,\\
            \left\{s_i^{\prime\alpha},s_j^{\prime\prime\dagger\alpha}\right\}&=0,&\left\{s_j^{\prime\alpha},s_i^{\prime\prime\dagger\alpha}\right\}&=0.
        \end{align}
        To save on notation, we redefine all stabilizers to be given by their $\alpha$ power, i.e. $s^\alpha\to s$ (and similarly redefine the phases $\theta',\theta'',\phi',\phi''$ by their scaling by $\alpha$). As $\ket{\psi_1},\ket{\psi_2}$ are CV graph states, these rescaled operators are still stabilizers. We also define:
        \begin{align}
            s_i&\equiv\ce^{-2\ci\left(\theta'-\theta''\right)}s_i' s_i^{\prime\prime\dagger},\\
            s_j&\equiv\ce^{-2\ci\left(\phi'-\phi''\right)}s_j' s_j^{\prime\prime\dagger},
        \end{align}
        which are stabilizers of $\ket{\bm{0}}_{\bm{\hat{q}}}$. Thus, we have constructed nine observables with constraints satisfying those of a Mermin--Peres magic square (see Table~\ref{tab:magic_square} for an example), a well-known proof of quantum contextuality~\cite{PhysRevLett.65.3373}.
        \begin{table}
            \begin{center}
                \begin{tabular}{c|c|c}
                    $s_i'$ & $s_j'$ & $s_i^{\prime\dagger}s_j^{\prime\dagger}$\\\hline
                    $s_i^{\prime\prime\dagger}$ & $s_j^{\prime\prime\dagger}$ & $s_i''s_j''$\\\hline
                    $s_i^\dagger$ & $s_j^\dagger$ & $s_i s_j$
                \end{tabular}
                \caption{The Mermin--Peres magic square of stabilizers of states mapping to the same latent space under a locally Lipschitz map (with $\theta'=\theta''=\phi'=\phi''=0$ for simplicity). Stabilizers of $\ket{\psi_1}$, $\ket{\psi_2}$, and $\ket{\bm{0}}_{\bm{\hat{q}}}$ make up the three rows. All observables in each row and column commute. Furthermore, the product of observables in each row and column is the identity, except for the third column, which gives minus the identity. See Table~\ref{table:cv_stab_example} for a special case of this magic square.}
                \label{tab:magic_square}
            \end{center}
        \end{table}
        
        Consider now the post-measurement states  $\ket{\psi_1'},\ket{\psi_2'}$ of $\ket{\psi_1},\ket{\psi_2}$, respectively, when $s_i s_j$ is measured to be $1$. By Table~\ref{tab:magic_square}, $\ket{\psi_1'}$ is stabilized by $s_i' s_j'$ and $s_i s_j$; furthermore, $\ket{\psi_2''}$ is stabilized by $s_i''s_j''$ and $s_i s_j$, and therefore is also stabilized by
        \begin{equation}
            s_i s_j s_i''s_j''=-\ce^{-2\ci\left(\theta'-\theta''+\phi'-\phi''\right)}s_i's_j'.
        \end{equation}
        If
        \begin{equation}
            \theta'-\theta''+\phi'-\phi''\neq\frac{\cpi}{2}\pmod{\cpi},
            \label{eq:phase_rel_dist_meas}
        \end{equation}
        we have that $\ket{\psi_1'}$ and $\ket{\psi_2'}$ are orthogonal. If it is congruent to $\frac{\cpi}{2}\pmod{\cpi}$, then either $\theta'-\theta''\neq 0\pmod{\cpi}$ or $\phi'-\phi''\neq 0\pmod{\cpi}$ (or both). Assume the latter WLOG (the former case is the same with $\phi\to\theta$ and $j\to i$), and instead consider the post-measurement states $\ket{\psi_1'},\ket{\psi_2'}$ of $\ket{\psi_1},\ket{\psi_2}$, respectively, when $s_j$ is measured to be $1$. By Table~\ref{tab:magic_square}, $\ket{\psi_1'}$ is stabilized by $s_j'$ and $s_j$; furthermore, $\ket{\psi_2''}$ is stabilized by $s_j''$ and $s_j$, and therefore is also stabilized by
        \begin{equation}
            s_j s_j''=\ce^{-2\ci\left(\phi'-\phi''\right)}s_j'.
        \end{equation}
        Thus, again, $\ket{\psi_1'}$ and $\ket{\psi_2'}$ are orthogonal.
    \end{proof}
    
    We now consider the locally Lipschitz online learner, with structure given by Fig.~\ref{fig:encoder_decoder_model}(a). We assume that the learner is deterministic, and discuss the extension to randomized models at the end of this Appendix.
    
    Online neural sequence models at time step $i$ map an input token $\bm{x_i}$ and a latent vector $\bm{\lambda_{i-1}}$ to an output token $m_i$ and a new latent vector $\bm{\lambda_i}$. After $n$ steps, then, we can consider the online model as a locally Lipschitz map:
    \begin{equation}
        \mathcal{F}^{\bm{r}}:\left(\mathbb{R}^{2n}\right)^n\to L\times\mathbb{R}^n,
        \label{eq:online_map}
    \end{equation}
    where $L$ is a locally Lipschitz latent manifold and $\bm{r}$ is a random vector such that, for any $\bm{r}$, $\mathcal{F}^{\bm{r}}$ is deterministic. This consideration of $\mathcal{F}^{\bm{r}}$ as a deterministic function of a random $\bm{r}$ is typically the implementation of stochastic learners, such as generative adversarial networks (GANs)~\cite{NIPS2014_5ca3e9b1} and flow-based models~\cite{rezende2015variational}. This also includes implementations of stochastic simulation algorithms such as Wigner function simulation~\cite{PhysRevLett.109.230503}. As for each $\bm{r}$, there exists an input sequence such that any classical model with $\dim\left(L\right)<\frac{n\left(n-3\right)}{2}$ deterministically outputs a measurement sequence inconsistent with quantum mechanics, our results still hold for these classes of random models. Due to this, in the following, we take the $\bm{r}$ dependence to be implicit.
    
    With the preliminaries in place, we now prove our expressivity separation.
    \begin{theorem}[Online stabilizer measurement translation lower bound]
        Consider an online model with locally Lipschitz latent manifold $L$ and locally Lipschitz map $\mathcal{F}$ as described in Eq.~\eqref{eq:online_map}. If $\dim\left(L\right)<\frac{n\left(n-3\right)}{2}$, this model cannot achieve a finite backward empirical cross entropy on the $\left(2,n\right)$ stabilizer measurement translation task.
        \label{thm:online_sep}
    \end{theorem}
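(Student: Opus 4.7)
The plan is to restrict attention to a well-structured family of inputs, apply the constant rank theorem to extract a fiber-bundle structure, and then combine the resulting dimension count with Lemma~\ref{lemma:graph_state_context} to produce single-shot distinguishable states that the model provably cannot tell apart. First, I would restrict the first $n$ input tokens to the subspace $K$ consisting of descriptions of CV graph-state preparations with no self-loops. Such a preparation is parameterized by the upper-triangular entries of an $n\times n$ symmetric adjacency matrix $\bm{B}$, so $\dim(K)=\frac{n(n-1)}{2}$.

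Second, since a locally Lipschitz map is differentiable almost everywhere by Rademacher's theorem, the Jacobian of $\left.\mathcal{F}\right|_K : K \to L \times \mathbb{R}^n$ has a well-defined maximum rank, attained on an open set. Using a Gaussian symplectic transformation on the input parameters (which reparameterizes the CV graph-state family and hence acts on $K$), I would translate a maximal-rank point to $\bm{B}=\bm{0}$, which corresponds to preparing $\ket{\bm{0}}_{\bm{\hat{q}}}$. Thus, without loss of generality, the rank of the Jacobian is constant on an open neighborhood $U$ of $\bm{0}$.

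Third, by the constant rank theorem of Ref.~\cite{butler_timourian_viger_1988}, which handles the locally Lipschitz setting, $\left.\mathcal{F}\right|_K$ acts on $U$ as a projection onto the base of an induced fiber bundle. Because the codomain is $L\times\mathbb{R}^n$ (the model emits both a latent state and $n$ measurement outcomes), the fiber through $\bm{0}$ has dimension at least
\begin{equation}
    \dim(K)-\dim(L)-n \;>\; \frac{n(n-1)}{2}-\frac{n(n-3)}{2}-n \;=\; 0.
\end{equation}
A positive-dimensional fiber in the space of adjacency matrices cannot be contained in the discrete sublattice of $\bm{B'}$ with $\bm{B'}\equiv\bm{0}\pmod{\pi}$ (nor in any pair of its translates), so I would pick $\bm{B'},\bm{B''}$ in the fiber so that the three resulting CV graph states have pairwise mod-$\pi$-distinct adjacency matrices. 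Then Lemma~\ref{lemma:graph_state_context} applied to $\ket{\bm{0}}_{\bm{\hat{q}}}$ and the states indexed by $\bm{B'},\bm{B''}$ supplies inputs $\bm{s_{n+1}},\bm{s_{n+2}}$ whose outcome string $(m_{n+1},m_{n+2})$ is forced by quantum mechanics to disagree on at least one of the three states. But because the three inputs share a fiber of $\left.\mathcal{F}\right|_K$, the model produces identical $\bm{\lambda_n}$ and identical $(m_1,\ldots,m_n)$ on all three, so its deterministic cells $\mathcal{F}_{n+1},\mathcal{F}_{n+2}$ must emit a single $(m_{n+1},m_{n+2})$ for all three. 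Hence on at least one of the three inputs the model outputs an incorrect translation with certainty, placing nonzero model probability on a zero-truth-probability translation, which by Eq.~\eqref{eq:emp_cross_ent} gives infinite backward empirical cross entropy. Running the argument for each fixed draw of $\bm{r}$ then covers the stochastic case.

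The hard part, in my view, will be the third step: the constant rank theorem is clean in the smooth category, but applying it to a merely locally Lipschitz map — where rank is defined only almost everywhere — requires the structural result of Ref.~\cite{butler_timourian_viger_1988}. A secondary technical wrinkle is ensuring that the Gaussian-symplectic reduction in step two preserves $K$ (up to a reparameterization of adjacency matrices) rather than pushing inputs outside the graph-state family; once these two points are dispatched, the dimension count and the invocation of Lemma~\ref{lemma:graph_state_context} are both direct.
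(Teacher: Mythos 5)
Your proposal is correct and follows essentially the same route as the paper's proof: restrict to the graph-state subspace $K$ of dimension $\frac{n(n-1)}{2}$, move a maximal-rank point to $\bm{B}=\bm{0}$, invoke the Lipschitz constant rank theorem to obtain a fiber on which the model is constant, and extract three mod-$\cpi$-inequivalent states to feed into Lemma~\ref{lemma:graph_state_context}. The only divergence is bookkeeping: you subtract $n$ from the fiber dimension to account for the $\mathbb{R}^n$ of emitted outcomes and then need only a positive-dimensional fiber escaping a countable set of equivalent adjacency matrices, whereas the paper bounds the rank by $\dim\left(L\right)$ alone, obtains a fiber of dimension at least $n+1$, and spends the surplus $n$ on escaping the (at most $n$-dimensional) set of row-rescaled, mod-$\cpi$-shifted representatives of the same graph state---both allocations land on the same threshold $\frac{n\left(n-3\right)}{2}$ and both are sound.
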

    \begin{proof}
        Consider $K\subset\left(\mathbb{R}^{2n}\right)^n$, with elements of the form:
        \begin{equation}
            \bm{Q}=\left(\begin{array}{cc|cc}
            \bm{B}+\bm{H}/2 &&& \left\lVert\bm{B}\right\rVert_{\text{F}}\bm{I_n}
            \end{array}\right);\label{eq:q_def}
        \end{equation}
        here, $\left\lVert\bm{B}\right\rVert_{\text{F}}$ is the Frobenius norm of $\bm{B}$, and $\bm{B}$ is an $n \times n$ hollow (zero diagonal elements) symmetric matrix with entries bounded to be $\left[-\frac{1}{4},\frac{1}{4}\right]$. $\bm{H}$ is the fixed $n \times n$ symmetric hollow matrix of ones, i.e.
        \begin{equation}
            \bm{H}=\begin{pmatrix}
            0 & 1 & \cdots & 1\\
            1 & 0 & \cdots & 1\\
            \vdots & \vdots & \ddots & \vdots\\
            1 & 1 & \cdots & 0\\
            \end{pmatrix}.
        \end{equation}
        It is obvious from this construction that $K$ is an $\frac{n\left(n-1\right)}{2}$-dimensional embedding of a compact subspace of hollow symmetric matrices (with bounded entries and norm) $\bm{B}$. Note that the states described by the measurement scenarios of points in $K$ are exactly CV graph states without loops (with bounded weight edges, as $K$ is compact) and, depending on the measurement results, perhaps overall phases on the stabilizers. To see this, note that the symmetric constraint on $\bm{B}$ ensures that the symplectic product of any two rows of $\bm{Q}$ is zero; furthermore, the final $n$ columns of $\bm{Q}$ are linearly independent for all $\bm{B}\neq\bm{0}$, and the first $n$ columns for $\bm{B}=\bm{0}$ due to the shift by $\bm{H}$. Thus, all points in $K$ are full row rank, and the rows of $\bm{Q}$ completely determine the CV stabilizer state, up to phases given by the measurement results of these operators. Furthermore, as $\bm{B}$ and $\bm{H}$ are hollow, the CV graph state $\bm{Q}$ describes has no loops. Also note that, up to independent rescalings of the rows of $\bm{Q}$, different $\bm{Q}$ correspond to different graph states. We assume WLOG that the Jacobian of $\mathcal{F}$ attains its maximum rank at $\bm{B}=\bm{0}$ (that is, the squeezed state $\ket{\bm{0}}_{\bm{\hat{q}}}$); this can always be done by implicitly transforming the basis of inputs to the model (i.e. by appropriately relabeling points in $\left(\mathbb{R}^{2n}\right)^{n+2}$), and then considering $K$ as previously defined in this new basis.
        
        We will proceed as follows. First, we will show that when $\dim\left(L\right)$ is sufficiently small, $\mathcal{F}$ must map three distinct CV graph states described by different $\bm{Q}$ to the same point in latent space. Then, we will use Lemma~\ref{lemma:graph_state_context} to show that the stabilizers of these states exhibit quantum contextuality (independent of the associated $n$ measurement results), and give rise to a distinguishing measurement sequence. Thus, by considering measurement sequences of length $n+2$ that include these three $\bm{Q}$ and the length two distinguishing measurement sequence, one of the final two measurement outcomes must be incorrect. This implies that there is an infinite backward empirical cross entropy on any finite set containing these three measurement sequences.
        
        Let us begin by showing that $\left.\mathcal{F}\right|_K$ (i.e. the locally Lipschitz map that is $\mathcal{F}$ restricted to $K$) must map three nontrivially distinct $\bm{Q}$ (i.e. three distinct CV graph states) to the same point in latent space when $\dim\left(L\right)$ is sufficiently small. By the constant rank theorem and the local Lipschitzness of $\left.\mathcal{F}\right|_K$, $\left.\mathcal{F}\right|_K$ is not injective for
        \begin{equation}
            \dim\left(L\right)<\dim\left(K\right)=\frac{n\left(n-1\right)}{2}.
        \end{equation}
        In particular, in a sufficiently small neighborhood of $\bm{B}=\bm{0}$ (where the Jacobian of $\left.\mathcal{F}\right|_K$ attains its maximal rank), there exist local coordinates $\bm{\tilde{x}}$ of $K$ and $L$ such that
        \begin{equation}
            \left.\mathcal{F}\right|_K\left(\tilde{x}_1,\ldots,\tilde{x}_{\frac{n\left(n-1\right)}{2}}\right)=\left(\tilde{x}_1,\ldots,\tilde{x}_l,0,\ldots,0\right)
        \end{equation}
        for some $l\leq\dim\left(L\right)<\frac{n\left(n-1\right)}{2}$~\cite{butler_timourian_viger_1988}. WLOG, we identify $\bm{\tilde{x}}=\bm{0}$ with $\bm{B}=\bm{0}$, which is the state infinitely squeezed in all $\hat{q}_i$. We will call $C$ the fiber with local coordinates
        \begin{equation}
            \bm{\tilde{x}}=\left(0,\ldots,0,\tilde{x}_{l+1},\ldots,\tilde{x}_{\frac{n\left(n-1\right)}{2}}\right),
        \end{equation}
        which is of dimension at least
        \begin{equation}
            \varDelta\equiv\frac{n\left(n-1\right)}{2}-\dim\left(L\right)\geq 1.
        \end{equation}
        By construction, all points in $C$---including $\bm{B}=\bm{0}$---map to the same point $l\in L$ under $\left.\mathcal{F}\right|_K$. We now assume that $\dim\left(L\right)<\frac{n\left(n-3\right)}{2}$ such that $\varDelta\geq n+1$.
        
        Now fix $\bm{B}=\bm{0}$ and $\bm{B'}\neq\bm{B}$ in $C$. As described previously, $\bm{Q}$ (and thus $\bm{B}$) completely determines a CV graph state after $n$ measurements, up to independent rescalings of the rows of $\bm{Q}$ (and the measurement results). As the dimension of the space of points that differ (modulo $\cpi$) from $\bm{B'}+\bm{H}/2$ by just a scaling factor in each row is at most $n$, because $\varDelta\geq n+1$ we must have that there exists another $\bm{B''}\neq\bm{B},\bm{B'}$ describing a distinct CV graph state. Therefore, by Lemma~\ref{lemma:graph_state_context}, we have that there exists a distinguishing measurement sequence for these three states. Note that as Lemma~\ref{lemma:graph_state_context} does not depend on the phases of the CV stabilizers, the existence of this distinguishing measurement holds true regardless of what the measurement results are (i.e. independently from what the model outputs for the first $n$ tokens in the decoded sequence). As after $n$ tokens all three sequences map to the same point in latent space in the model, and as they share a distinguishing measurement sequence, the model must obtain an infinite backward empirical cross entropy on these three input sequences when followed by the distinguishing measurement sequence.
    \end{proof}
    
    \subsection{A CV Gottesmann--Knill Lower Bound}

    We now show that our results can be reformulated as a memory lower bound on the classical simulation of stabilizer measurement scenarios. In practice, using finite resources (i.e. at finite precision), any classical ontological model simulating $p\left(\bm{y}\mid\bm{x}\right)$ can only be evaluated at a finite number of $\bm{x}$. We now show that \emph{any} locally Lipschitz interpolation of such a model to real $\bm{x}$ cannot accurately simulate Gaussian operations on an initial GKP state. This includes, for instance, any polynomial interpolation (which always exists).
    \begin{corollary}[CV Gottesmann--Knill lower bound]
        Consider a classical ontological model $p\left(\bm{y}\mid\bm{x}\right)$ with a latent space of dimension less than $\frac{n\left(n-3\right)}{2}$, simulating $\left(k,n\right)$ stabilizer measurement translation with $k\geq 2$. Assume that this ontological model is defined at a finite number of $\bm{x}$. There exists a locally Lipschitz interpolation of this model to all $\bm{x}$. Furthermore, no locally Lipschitz interpolation of this model can faithfully perform stabilizer measurement translation at all $\bm{x}$.
    \end{corollary}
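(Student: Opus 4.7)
The plan is to split the corollary into an existence claim and an impossibility claim, with the latter reduced directly to Theorem~\ref{thm:online_sep}. Since the ontological model is specified at only finitely many inputs, any continuous extension matching the prescribed values at those points will give a valid interpolation; to ensure local Lipschitzness, I would work componentwise on the latent-assignment map $\bm{x}\mapsto\bm{\lambda}$ and the output-response map, and either polynomially interpolate each coordinate (polynomials are smooth, hence locally Lipschitz on compacts) or use a piecewise linear interpolation based on a triangulation of the finite input set padded with any continuous decay at infinity. Either construction is almost everywhere differentiable with locally bounded Jacobian norm, and hence locally Lipschitz by the criterion recalled in Appendix~\ref{sec:seq_learning_background}.

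For the impossibility half, the key observation is that the interpolated model is a locally Lipschitz map of exactly the form treated in Theorem~\ref{thm:online_sep}: its latent space $L$ satisfies $\dim(L)<\frac{n(n-3)}{2}$, and the hypothesis $k\geq 2$ ensures that the input language accommodates the sequences used in the proof of that theorem, namely $n$ tokens preparing a CV graph state followed by the two-token distinguishing measurement sequence produced by Lemma~\ref{lemma:graph_state_context}. Invoking Theorem~\ref{thm:online_sep} on the interpolated model then forces at least one such test sequence to receive an incorrect translation with certainty, which contradicts faithful stabilizer measurement translation on all $\bm{x}$ and completes the argument.

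The main subtlety I anticipate is the componentwise nature of the interpolation: if one were to naively interpolate $p(\bm{y}\mid\bm{x})$ as a single object, the resulting extension could in principle be packaged with an arbitrarily large effective latent space, undermining the reduction. By instead interpolating the two maps that define the ontological model separately, the structural bound $\dim(L)<\frac{n(n-3)}{2}$ is preserved verbatim, and Theorem~\ref{thm:online_sep} applies without modification. Everything else, including the contextuality argument and the construction of the distinguishing measurement sequence, is already supplied by Theorem~\ref{thm:online_sep} and Lemma~\ref{lemma:graph_state_context}, so no new quantum-mechanical input is required.
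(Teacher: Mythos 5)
Your proposal is correct and follows essentially the same route as the paper: the paper also establishes existence via polynomial interpolation (noting polynomials are locally Lipschitz) and then derives the impossibility by applying Theorem~\ref{thm:online_sep} to any locally Lipschitz interpolation, using that compositions of locally Lipschitz maps are locally Lipschitz. Your additional remark about interpolating the latent-assignment and output maps separately so as to preserve the bound $\dim(L)<\frac{n(n-3)}{2}$ is a sensible explicit articulation of what the paper leaves implicit.
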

    \begin{proof}
        As there exists a polynomial interpolation of $p$, and as all polynomials of finite degree are locally Lipschitz, there exists a locally Lipschitz interpolation of this model to all $\bm{x}$. Furthermore, no locally Lipschitz interpolation of this model can faithfully perform stabilizer measurement translation by Theorem~\ref{thm:online_sep}, as the composition of locally Lipschitz functions is locally Lipschitz.
    \end{proof}
    
    \subsection{Expressivity Separation for Encoder-Decoder Models}
    
    Though online sequence models are perhaps conceptually the simplest as they directly map input tokens to output tokens, in practice encoder-decoder models outperform them~\cite{10.5555/2969033.2969173,10.5555/3295222.3295349}. We now show that no encoder-decoder model with a locally Lipschitz encoder (and an additional technical assumption) can perform stabilizer measurement translation to finite backward empirical cross entropy. The proof will be similar to that of Theorem~\ref{thm:online_sep}; however, as the model can see the entire input sequence at once, we do not directly have the freedom to choose the distinguishing measurement sequence as in Theorem~\ref{thm:online_sep}. Instead, we will require an input sequence of length quadratic in $n$ (and our additional technical assumption) to force the distinguishing measurement sequence. Note that, as the memory of the contextual learner is independent of the sequence length, this new sequence length has no impact on the memory separation.
    
    We consider an encoder-decoder model with structure given by Fig.~\ref{fig:encoder_decoder_model}(b). The encoder of such a model can be considered a locally Lipschitz map
    \begin{equation}
        \mathcal{E}^{\bm{r}}:\left(\mathbb{R}^{2n}\right)^{n^2}\to L
        \label{eq:model_map}
    \end{equation}
    to some locally Lipschitz latent manifold $L$. As in Appendix~\ref{sec:online_sep}, $\bm{r}$ is a random vector such that, for any $\bm{r}$, $\mathcal{E}^{\bm{r}}$ is deterministic. We once again make the $\bm{r}$ dependence implicit in the following.
    
    For technical reasons, we slightly change the definition of the $\left(k,n\right)$ stabilizer measurement translation task, where now the final $k$ measurement descriptions instead describe the measurements of the operators:
    \begin{equation}
        \hat{s}_i=\exp\left(\ci\sum\limits_{j=1}^n\frac{\bm{1}\left[s_{ij}^q\neq 0\right]}{s_{ij}^q}\hat{q}_j+\ci\sum\limits_{j=1}^n\frac{\bm{1}\left[s_{ij}^p\neq 0\right]}{s_{ij}^p}\hat{p}_j\right),
    \end{equation}
    where we define $\frac{\bm{1}\left[x\neq 0\right]}{x}$ to be zero when $x=0$. We will call this the \emph{modified $\left(k,n\right)$ stabilizer measurement translation task}. This can obviously still be performed perfectly with a CRNN of model size $n$, by either changing the parameters of the phase estimation circuit of Fig.~\ref{fig:online_contextual_model}(b) to be given by $\frac{\bm{1}\left[x_{ij}\neq 0\right]}{x_{ij}}$, or more formally by introducing a quantum circuit computing $\frac{\bm{1}\left[x_{ij}\neq 0\right]}{x_{ij}}$ on which these gates control.
    
    We now discuss our additional technical assumption. Defining the subspace $R$ of inputs as in the proof of Theorem~\ref{thm:enc_dec_sep}, we assume that the Jacobian of the encoder restricted to $R$ attains its maximal rank at some point of the form $\left(\bm{Q},\bm{0}\right)\in R$. A sufficient condition for this is requiring that some point of the form $\left(\bm{Q},\bm{0}\right)$ is not a critical point of $\left.\mathcal{E}\right|_R$; this condition is satisfied by generic $\mathcal{E}$ when $\dim\left(L\right)<\frac{n\left(n-3\right)}{2}$, and also by models with encoders constrained to be submersions. In fact, when the latter holds, then it is easy to see from the proof of Theorem~\ref{thm:enc_dec_sep} that the separation still holds on the \emph{unmodified} $\left(k,n\right)$ stabilizer measurement translation task, as all properties we use that hold locally then hold globally. Any one of these conditions is sufficient, and needed for our proof technique to be able to analyze any neighborhood of the non-Gaussian measurements we consider.
    \begin{theorem}[Encoder-decoder stabilizer measurement translation lower bound]
        Consider an encoder-decoder model with locally Lipschitz latent manifold $L$. Let $\mathcal{E}$ be the associated locally Lipschitz encoder function, as defined in Eq.~\eqref{eq:model_map}, and assume that the Jacobian of the map $\left.\mathcal{E}\right|_R$ (where the subspace $R$ is defined below) attains its maximal rank at some point of the form $\left(\bm{Q},\bm{0}\right)\in R$. If $\dim\left(L\right)<\frac{n\left(n-3\right)}{2}$, this model cannot achieve a finite backward empirical cross entropy on the modified $\left(n^2-n,n\right)$ stabilizer measurement translation task.\label{thm:enc_dec_sep}
    \end{theorem}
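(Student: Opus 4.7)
The plan is to mimic the proof of Theorem~\ref{thm:online_sep} as closely as possible, with the new twist being that, since the encoder $\mathcal{E}$ consumes the entire input sequence in a single pass, the distinguishing measurement sequence from Lemma~\ref{lemma:graph_state_context} can no longer be chosen adversarially \emph{after} observing which three CV graph states collapse onto a shared fiber. Instead, the distinguishing measurements must be baked into the input before we apply any geometric argument. This is exactly what the $n^2-n$ extra input slots of the modified task buy: there are $n(n-1)$ ordered index pairs $(i,j)$ with $i\neq j$, so the final $n^2-n$ slots have enough room to list, for every such pair, one candidate stabilizer of the form $s_i$, $s_j$, or $s_is_j$ that Lemma~\ref{lemma:graph_state_context} might prescribe.

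Concretely, I would define the restricted input subspace $R\subset\left(\mathbb{R}^{2n}\right)^{n^2}$ as the image of a smooth embedding whose first $n$ columns reproduce the matrix $\bm{Q}$ of Eq.~\eqref{eq:q_def} (parameterized by the hollow symmetric $\bm{B}$, thereby encoding CV graph states exactly as in Theorem~\ref{thm:online_sep}), and whose remaining $n^2-n$ columns parameterize near-identity modified non-Gaussian measurements that together realize the full list of candidate distinguishing operators indexed by the pairs $(i,j)$. The point of the modification $s_{ij}\mapsto \bm{1}[s_{ij}\neq 0]/s_{ij}$ is that at the base point where the last-block coordinates vanish the corresponding $\hat{s}_i$ are the identity, while perturbations realize the arbitrary CV Pauli exponents $X_i(\alpha),Z_j(\beta)$ needed in Lemma~\ref{lemma:graph_state_context}; this is what makes the constant rank analysis sensible in a full neighborhood of $(\bm{Q},\bm{0})$.

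Next I would invoke the theorem's hypothesis that the Jacobian of $\left.\mathcal{E}\right|_R$ attains its maximal rank at some $(\bm{Q}_0,\bm{0})\in R$. By the constant rank theorem and the local Lipschitzness of $\mathcal{E}$, in a neighborhood of $(\bm{Q}_0,\bm{0})$ the map $\left.\mathcal{E}\right|_R$ is, in suitable local coordinates, a linear projection, with fiber dimension at least $\dim(R)-\dim(L)$. Since the first-$n$ block already contributes $\tfrac{n(n-1)}{2}$ independent directions and $\dim(L)<\tfrac{n(n-3)}{2}$, the fiber through $(\bm{Q}_0,\bm{0})$ has dimension at least $n+1$, exactly as in the online case. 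Projecting this fiber onto the first-$n$ block and repeating the counting argument of Theorem~\ref{thm:online_sep}---the space of rescalings of rows contributes at most $n$ dimensions of redundancy in the induced CV graph state---yields three points of $R$ mapping to the same latent vector $\ell\in L$ but whose first-$n$ blocks describe three pairwise distinct CV graph states $\ket{\bm{0}}_{\bm{\hat{q}}},\ket{\psi_1},\ket{\psi_2}$. Lemma~\ref{lemma:graph_state_context} now guarantees a length-two distinguishing measurement sequence built from operators of the form $s_i, s_j, s_is_j$. Because $R$ was constructed so that every such candidate pair already appears in the last-$n^2-n$ slots of every point, this distinguishing sequence is literally present in the inputs the encoder has just crushed onto $\ell$; the decoder, receiving the same $\ell$ for all three, must therefore return an incorrect token for at least one of them at one of the two distinguishing positions, yielding infinite backward empirical cross entropy.

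The main obstacle is packaging step one cleanly: $R$ must be a smoothly embedded submanifold large enough that the constant rank theorem forces a fiber of dimension $\geq n+1$, while simultaneously carrying, at every base point $(\bm{Q},\bm{0})$, a prescribed family of candidate distinguishing measurements covering all possible outputs of Lemma~\ref{lemma:graph_state_context}. The subtlety is that Lemma~\ref{lemma:graph_state_context} chooses the distinguishing measurement only \emph{after} we know the two offending CV graph states and, in some branches, only after we know a particular phase relation (recall \eqref{eq:phase_rel_dist_meas}); one must verify that the $O(n^2)$ candidate operators installed in $R$ really do exhaust those branches. The modified task and the maximal-rank hypothesis on $\left.\mathcal{E}\right|_R$ are the exact two ingredients that let this bookkeeping go through locally, after which the remainder is a direct reduction to the online argument.
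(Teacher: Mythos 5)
Your high-level framing is right---the difficulty is that the encoder sees the whole input at once, so the distinguishing sequence must somehow be forced into the input---but your mechanism for doing this does not work, and it is not the paper's mechanism. You propose to \emph{pre-install} a finite list of candidate distinguishing operators, ``one for each ordered pair $(i,j)$,'' into the last $n^2-n$ slots. However, the distinguishing operators produced by Lemma~\ref{lemma:graph_state_context} are not indexed by pairs: they are operators such as $s_i s_j=\ce^{-2\ci(\theta'-\theta''+\phi'-\phi'')}s_i's_j's_i^{\prime\prime\dagger}s_j^{\prime\prime\dagger}$ and $s_i's_j'$, whose exponents depend \emph{continuously} on the adjacency matrices $\bm{B'},\bm{B''}$ of the two colliding graph states (and on the rescaling $\alpha$ chosen to force anticommutation). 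These states are only identified \emph{after} the constant-rank/fiber argument, so the candidate set is an uncountable family that cannot be enumerated in finitely many slots. Even setting that aside, measuring your whole candidate list in sequence would generically disturb $\ket{\psi_1},\ket{\psi_2}$ by back-action before the relevant pair is reached, destroying the orthogonality that Lemma~\ref{lemma:graph_state_context} guarantees only for the length-two sequence applied directly to the graph states. A further gap: you run the constant-rank argument on all of $R$ and then ``project the fiber onto the first-$n$ block,'' but an $(n+1)$-dimensional fiber in $R$ could lie entirely in the $\bm{P}$ directions and project to a single $\bm{Q}$, giving no distinct graph states at all. The paper avoids this by applying the constant rank theorem to $\left.\mathcal{E}\right|_K$, the restriction to the slice $\bm{P}=\bm{0}$, so the fiber is forced into the $\bm{Q}$ directions.

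The paper's actual resolution is different in kind: the last $n^2-n$ rows of $\bm{P}$ are left \emph{arbitrary}, and after the three colliding graph states (and hence the specific distinguishing sequence) are found at $\bm{P}=\bm{0}$, an iterative dimension argument is used to insert that sequence while staying on the same level set of $\mathcal{E}$. One restricts $\mathcal{E}$ to varying only the last $S$ rows of $\bm{P}$ (all others zero, i.e.\ identity measurements under the modified task) and tracks the local image dimension $d_S$ as $S$ increases by $2$; since $d_S$ is nondecreasing and bounded by $\dim(L)<\frac{n(n-3)}{2}$, the iteration stabilizes with $d_{S_i}=d_{S_{i-1}}$ well before the $n^2-n$ rows are exhausted, at which point two designated rows can carry any (arbitrarily small, thanks to the $\alpha\to3\alpha$ trick and the $1/s$ encoding) distinguishing sequence while the trailing rows are adjusted to keep $\mathcal{E}$ constant. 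This iteration is the essential missing idea in your proposal; it is also the reason the task length is $n^2-n$ rather than $2$, and why the modified (reciprocal) encoding and the maximal-rank hypothesis at $(\bm{Q},\bm{0})$ are needed. Your correct observations---that zero rows encode identity measurements and that the reciprocal encoding keeps the analysis local---are necessary but not sufficient without this step.
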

    \begin{proof}
        Consider $R\equiv K\times\left(\mathbb{R}^{2n}\right)^{n^2-n}\subset\left(\mathbb{R}^{2n}\right)^n\times\left(\mathbb{R}^{2n}\right)^{n^2-n}\cong\left(\mathbb{R}^{2n}\right)^{n^2}$ with elements $\left(\bm{Q},\bm{P}\right)$ of the following form:
        \begin{enumerate}
            \item $\bm{Q}\in K$ is given by rows of matrices of the form:
            \begin{equation}
                \bm{Q}=\left(\begin{array}{cc|cc}
                \bm{B}+\bm{H}/2 &&& \left\lVert\bm{B}\right\rVert_{\text{F}}\bm{I_n}
                \end{array}\right);
            \end{equation}
            here, $\left\lVert\bm{B}\right\rVert_{\text{F}}$ is the Frobenius norm of $\bm{B}$, and $\bm{B}$ is a hollow symmetric matrix with entries bounded to be $\left[-\frac{1}{4},\frac{1}{4}\right]$. $\bm{H}$ is the fixed symmetric hollow matrix of ones, i.e.
            \begin{equation}
                \bm{H}=\begin{pmatrix}
                0 & 1 & \cdots & 1\\
                1 & 0 & \cdots & 1\\
                \vdots & \vdots & \ddots & \vdots\\
                1 & 1 & \cdots & 0\\
                \end{pmatrix}.
            \end{equation}
            \item $\bm{P}\in\left(\mathbb{R}^{2n}\right)^{n^2-n}$ is an $n^2-n\times 2n$ matrix that is arbitrary.
        \end{enumerate}
        It is obvious from this construction that $K$ is an $\frac{n\left(n-1\right)}{2}$-dimensional embedding of a compact subspace of hollow symmetric matrices (with bounded entries and norm) $\bm{B}$. We assume WLOG that the Jacobian of $\left.\mathcal{E}\right|_K$ (that is, the locally Lipschitz restriction of $\mathcal{E}$ to points of the form $\left(\bm{Q},\bm{0}\right)\in R$) attains its maximum rank at $\bm{B}=\bm{0}$ (that is, the squeezed state $\ket{\bm{0}}_{\bm{\hat{q}}}$); this can always be done by implicitly transforming the basis of the inputs to the model (i.e. by appropriately relabeling points in $\left(\mathbb{R}^{2n}\right)^{n^2}$), and then considering $K$ as previously defined in this new basis.
        
        We now give some intuition behind points in the smooth manifold (with boundary) $R$. At fixed $\bm{P}=\bm{0}$, the states described by the measurement scenarios of points in $R$ are exactly CV graph states without loops (with bounded weight edges, as $K$ is compact) and, depending on the measurement results, perhaps overall phases on the stabilizers. To see this, note that the symmetric constraint on $\bm{B}$ ensures that the symplectic product of any two rows of $\bm{Q}$ is zero; furthermore, the final $n$ columns of $\bm{Q}$ are linearly independent for all $\bm{B}\neq\bm{0}$, and the first $n$ columns for $\bm{B}=\bm{0}$ due to the shift by $\bm{H}$. Thus, all points in $K$ are full row rank, and the rows of $\bm{Q}$ completely determine the CV stabilizer state, up to phases given by the measurement results of these operators (which are not yet determined at the time of encoding). Furthermore, as $\bm{B}$ and $\bm{H}$ are hollow, the CV graph state $\bm{Q}$ describes has no loops. Also note that, up to trivial rescalings of the rows of $\bm{Q}$, different $\bm{Q}$ correspond to different graph states. At general $\bm{P}$, the state after the first $n$ measurements is still a CV graph state completely determined by $\bm{Q}$ (up to phases from the first $n$ measurement results); different $\bm{P}$ correspond to different (non-Gaussian) measurement scenarios given an initial CV graph state determined by $\bm{Q}$ (and the first $n$ measurement results).
        
        We will proceed as follows. First, we will show that as $\dim\left(L\right)<\frac{n\left(n-3\right)}{2}$, $\mathcal{E}$ must map three distinct CV graph states described by different $\bm{Q}$ to the same point in latent space when $\bm{P}=\bm{0}$. Then, we will use Lemma~\ref{lemma:graph_state_context} to show that the stabilizers of these states exhibit quantum contextuality (independent of the associated $n$ measurement results), and give rise to a distinguishing measurement sequence. Finally, we will show that one can locally find $\bm{P}\neq\bm{0}$ mapping to the same point in latent space that contains this distinguishing measurement sequence, forcing an incorrect measurement outcome on one of these states. This gives rise to an infinite backward empirical cross entropy on this task.
        
        Let us begin by showing that $\left.\mathcal{E}\right|_R$ (i.e. the locally Lipschitz map that is $\mathcal{E}$ restricted to $R$) must map three nontrivially distinct $\bm{Q}$ (i.e. three distinct CV graph states) to the same point in latent space. We will consider the restriction $\left.\mathcal{E}\right|_K$, which is the (locally Lipschitz) restriction of $\mathcal{E}$ to points of the form $\left(\bm{Q},\bm{0}\right)\in R$. We will show that this map is not injective for small enough $\dim\left(L\right)$. As $\left.\mathcal{E}\right|_R$ lifts to $\left.\mathcal{E}\right|_K$, this will show that three distinct $\bm{Q}$ map to the same point in latent space under $\left.\mathcal{E}\right|_R$. This is similar to the construction for $\left.\mathcal{F}\right|_K$ in the proof Theorem~\ref{thm:online_sep}; we repeat it here for completeness.
        
        By the constant rank theorem and the local Lipschitzness of $\left.\mathcal{E}\right|_K$, $\left.\mathcal{E}\right|_K$ is not injective for
        \begin{equation}
            \dim\left(L\right)<\dim\left(K\right)=\frac{n\left(n-1\right)}{2}.
        \end{equation}
        In particular, in a sufficiently small neighborhood of $\bm{B}=\bm{0}$ (where the Jacobian of $\left.\mathcal{E}\right|_K$ attains its maximal rank), there exist local coordinates $\bm{\tilde{x}}$ of $K$ and $L$ such that
        \begin{equation}
            \left.\mathcal{E}\right|_K\left(\tilde{x}_1,\ldots,\tilde{x}_{\frac{n\left(n-1\right)}{2}}\right)=\left(\tilde{x}_1,\ldots,\tilde{x}_l,0,\ldots,0\right)
        \end{equation}
        for some $l\leq\dim\left(L\right)<\frac{n\left(n-1\right)}{2}$~\cite{butler_timourian_viger_1988}. WLOG, we identify $\bm{\tilde{x}}=\bm{0}$ with $\bm{B}=\bm{0}$, which is the state infinitely squeezed in all $\hat{q}_i$. We will call $C$ the fiber with local coordinates
        \begin{equation}
            \bm{\tilde{x}}=\left(0,\ldots,0,\tilde{x}_{l+1},\ldots,\tilde{x}_{\frac{n\left(n-1\right)}{2}}\right),
        \end{equation}
        which is of dimension at least
        \begin{equation}
            \varDelta\equiv\frac{n\left(n-1\right)}{2}-\dim\left(L\right)\geq 1.
        \end{equation}
        By construction, all points in $C$---including $\bm{B}=\bm{0}$---map to the same point $l\in L$ under $\left.\mathcal{E}\right|_K$. We now assume that $\dim\left(L\right)<\frac{n\left(n-3\right)}{2}$ such that $\varDelta\geq n+1$.
        
        Now fix $\bm{B}=\bm{0}$ and $\bm{B'}\neq\bm{B}$ in $C$. As described previously, $\bm{Q}$ (and thus $\bm{B}$) completely determines a CV graph state after $n$ measurements, up to trivial rescalings of the rows of $\bm{Q}$ and the measurement results. As the dimension of the space that differ (modulo $\cpi$) from $\bm{B'}+\bm{H}/2$ by just a scaling factor in each row is at most $n$, because $\varDelta\geq n+1$ we must have that there exists another $\bm{B''}\neq\bm{B},\bm{B'}$ describing a distinct CV graph state. Therefore, by Lemma~\ref{lemma:graph_state_context}, we have that there exists a distinguishing measurement $s$ that is a stabilizer of the state corresponding to $\bm{B}=\bm{0}$. Note that as Lemma~\ref{lemma:graph_state_context} does not depend on the phases of the CV stabilizers, the existence of this distinguishing measurement holds true regardless of what the measurement results are (i.e. independently from what the model outputs for the first $n$ tokens in the decoded sequence). Depending on these measurement results, however, the distinguishing measurement could be one of three different measurement sequences, depending on the validity of Eq.~\eqref{eq:phase_rel_dist_meas}.
        
        We have now shown that there exists $\left(\bm{Q_i},\bm{0}\right)$ for $1\leq i\leq 3$ such that all $\bm{Q_i}$ are distinct, and that there exists a measurement of a stabilizer of $\bm{Q_1}$ such that the post-measurement states of $\bm{Q_2}$, $\bm{Q_3}$ are orthogonal. We will now show that there exist $\bm{P_i}$ such that $\left(\bm{Q_i},\bm{P_i}\right)$ also maps to the same point in latent space, and $\bm{P_i}$ is a distinguishing measurement sequence. This will give the final separation.
        
        First, note that one can find a $\bm{P_i}$ given an arbitrarily small bound on its norm that encodes a distinguishing measurement sequence; this is because, in the proof of Lemma~\ref{lemma:graph_state_context}, one can repeatedly take $\alpha\to 3\alpha$ to yield a distinguishing measurement sequence using arbitrarily large stabilizer powers, which corresponds to arbitrarily small $\bm{P_i}$ in the modified stabilizer measurement translation task. Now, consider $\left.\mathcal{E}\right|_S$, defined as the restriction of $\mathcal{E}$ to points of the form $\left(\bm{Q_i},\bm{P}\right)\in R$ for any of the fixed $\bm{Q_i}\in K$, where $\bm{P}$ is zero in all rows except for the last $S$ rows. Initially, let $S=S_1=2$; that is, $\bm{P}$ is restricted to be all zero except for its final two rows, which are allowed to vary. Let $d_{S_1}$ be the dimension of the image of $\left.\mathcal{E}\right|_S$. If $d_{S_1}=0$, then $\mathcal{E}$ is locally independent from the final two rows of $\bm{P}$ at $\bm{Q_i},\bm{m_i}$ when all other rows are fixed to be zero, and we are done as we can set these two rows to be the distinguishing measurement sequence. If $d_{S_1}>0$, then consider $S=S_2=4$. If $d_{S_2}=d_{S_1}$, then for all local choices of the third and fourth final rows, there exists a choice of the final two rows such that $\mathcal{E}$ is constant. Then, all of the possible distinguishing measurement scenarios can be encoded into the third and fourth final rows of $\bm{P}$ (with the appropriate choice of final two rows) and map to the same point in latent space as when $\bm{P}=\bm{0}$, yielding the appropriate distinguishing measurement sequence.
        
        If $d_{S_2}>d_{S_1}$ instead, we are able to iterate this procedure once more. As $\dim\left(L\right)<\frac{n\left(n-3\right)}{2}$, eventually this iteration will stop with $d_{S_i}=d_{S_{i-1}}$, and we have the freedom to set two rows of $\bm{P}$ to be the distinguishing measurement sequence and map to the same point in latent space as $\bm{P}=\bm{0}$.
        
        We have thus shown that the encoder-decoder model must map three points in $R$ that give rise to a distinguishing measurement scenario to the same point in latent space. That is, when $\dim\left(L\right)<\frac{n\left(n-3\right)}{2}$, there exists input sequences $\bm{s_i}$ that map to the same point in the latent space of the model that must give rise to orthogonal measurement results. As they are mapped to the same point in the latent space of the model, the model must output the same translation for all of them, giving at least one incorrect result. Thus, the backward empirical cross entropy when translating one of these $\bm{s_i}$ must be infinite.
    \end{proof}
    
    \section{Considerations for Experimental Implementations}\label{sec:exp_consds}
    
    We have shown in Appendix~\ref{sec:proof_of_express_sep} that CRNNs are more expressive than both online and encoder-decoder sequence models, assuming a locally Lipschitz condition on the models (and an additional technical condition on the latter class of models). Though CRNNs only utilize Gaussian operations---which are believed to be much simpler to implement that universal CV quantum computing~\cite{RevModPhys.77.513}---our model also utilizes non-Gaussian ancilla states to perform non-Gaussian measurements. As we formally compare our quantum model with infinite precision classical models, we have a formal requirement for GKP ancilla states with infinite homodyne precision measurements to show a separation.
    
    Of course, in practice, classical neural sequence models are finite precision. In particular, tensor processing unit (TPU) implementations of classical neural networks often use as imprecise as 8-bit arithmetic. We therefore expect that one can circumvent the formal need for GKP states to use \emph{qubit} ancilla states to perform phase estimation of the CV stabilizer operators to a precision matching that of classical neural networks. There are proposals for engineering the required longitudinal photon/qubit interactions using circuit quantum electrodynamics (QED)~\cite{Kerman_2013,RevModPhys.93.025005}. As similar couplings are already used in proposals for the generation of approximate GKP states~\cite{Shi_2019}, this direct approach is likely more experimentally feasible. Furthermore, such a finite precision implementation may actually \emph{gain} expressive power compared to infinite precision CRNNs. Assuming the back action of the finite precision measurement yields a finitely squeezed state in the CRNN, with this one can construct a model capable of universal CV quantum computation~\cite{PhysRevLett.123.200502,calcluth2022}. This also holds when the initial state of the model is taken to be the vacuum state (or any other finitely squeezed Gaussian state). This suggests a potential superpolynomial advantage in the expressive power and the time complexity of inference when this model is implemented at finite precision.
    
    If one wishes to avoid coupling to qubits, we make the bolder conjecture that any non-Gaussian ancilla state is enough to yield a separation. Our intuition for this comes from recent work~\cite{booth2021,haferkamp2021} demonstrating that non-Gaussian operations are equivalent to the presence of quantum contextuality. As the presence of quantum contextuality is the source of the separations in our proofs, this gives evidence that one could use a more experimentally feasible non-Gaussian ancilla state than a GKP state---such as a photon subtracted state~\cite{ra2020non}---and achieve similar results.
    
    \section{Classical Simulation of Gaussian Operations and GKP States}\label{sec:sim_deets}
    
    We now describe the high level strategy of classically simulating both Gaussian operations applied to Gaussian states, and (restricted) Gaussian operations applied to GKP states. The former strategy will roughly follow that given in \cite{PhysRevA.83.042335}, and the latter that given in \cite{calcuth2022}. These will be the building blocks of the Gaussian RNN and contextual RNN cells we numerically test in Sec.~\ref{sec:numerics}; we give the details of the full architectures, including the classical architectures, in Appendix~\ref{sec:numerics_deets}.
    
    \subsection{Gaussian States}\label{sec:gaussian_state_sim_deets}
    
    First, we describe the simulation of Gaussian operations performed on Gaussian states. It is well known that any $N$ mode Gaussian pure state can be created from the ground state of $N$ harmonic oscillators with unitary operations $\ce^{\ci f\left(\hat{\bm{q}},\hat{\bm{p}}\right)}$, where $f\left(\hat{\bm{q}},\hat{\bm{p}}\right)$ are at most quadratic in terms of the quadrature operators $\hat{\bm{q}}$ and $\hat{\bm{p}}$. By stacking $\hat{\bm{q}}$ and $\hat{\bm{p}}$ together, we call $\hat{\bm{x}}=\left(\hat{\bm{q}},\hat{\bm{p}}\right)^\intercal $ the quadrature operator. The Heisenberg evolution of the quadrature operator under a Gaussian unitary operator $\bm{R}$ is in general:
    \begin{equation}
        \hat{\bm{x'}}=\bm{R}^{\dagger}\hat{\bm{x}}\bm{R}=\bm{S}\hat{\bm{x}}+\bm{c},
    \end{equation}
    where $\bm{S}$ is a symplectic matrix of c-numbers, and $\bm{c}=\left(\bm{c}_{\bm{q}},\bm{c}_{\bm{p}}\right)^\intercal $ is a vector of c-numbers denoting the mode center shift. There are various ways to decompose a symplectic matrix $\bm{S}$; WLOG (as discussed in \cite{PhysRevA.83.042335}), we will consider symplectic matrices of the form:
    \begin{equation}
        \bm{S}=\begin{pmatrix}\bm{U}^{-1/2}& \bm{0}\\ \bm{V}\bm{U}^{-1/2} & \bm{U}^{1/2}\end{pmatrix},
    \end{equation}
    where $\bm{U}$ and $\bm{V}$ are both real symmetric matrices. In addition, $\bm{U}$ is positive definite, i.e. $\bm{U}=\bm{U}^\intercal >0$. If the covariance matrix of the $N$ mode ground state is 
    \begin{equation}
        \operatorname{cov}\left(\hat{\bm{x}}_0\right)=\frac{1}{2}\bm{I}, 
    \end{equation}
    then the covariance matrix of the transformed Gaussian state is:
    \begin{equation}
        \begin{aligned}
            \bm{\varSigma}&=\operatorname{cov}\left(\hat{\bm{x}}\right)=\operatorname{cov}\left(\bm{S}\hat{\bm{x}}_0\right)=\frac{1}{2}\left\langle\left\{\left(\bm{S}\hat{\bm{x}}_0\right)^{\dagger},\left(\bm{S}\hat{\bm{x}}_0\right)^\intercal\right\}\right\rangle\\
            &=\frac{1}{2}\bm{S}\bm{S}^\intercal =\frac{1}{2}\begin{pmatrix}\bm{U}^{-1}& \bm{U}^{-1}\bm{V}\\
            \bm{V}\bm{U}^{-1} & \bm{U}+\bm{V}\bm{U}^{-1}\bm{V}\end{pmatrix}.
        \end{aligned}
    \end{equation}
    We can also write down its wavefunction in position space as:
    \begin{equation}
    \begin{aligned}
        \psi_{\bm{Z},\bm{c}}\left(\bm{q}\right)&=\pi^{-N/4}\left(\det\left(\bm{U}\right)\right)^{1/4}\exp\left(-\frac{1}{2}\left(\bm{q}-\bm{c}_{\bm{q}}\right)^\intercal \left(\bm{U}-\ci\bm{V}\right)\left(\bm{q}-\bm{c}_{\bm{q}}\right)\right)\\
        &=\pi^{-N/4}\left(\det\left(\bm{U}\right)\right)^{1/4}\exp\left(\frac{\ci}{2}\left(\bm{q}-\bm{c}_{\bm{q}}\right)^\intercal \bm{Z}\left(\bm{q}-\bm{c}_{\bm{q}}\right)\right),
    \end{aligned}
    \end{equation}
    where $\bm{q}$ and $\bm{c}_{\bm{q}}$ are c-number column vectors, and $\bm{Z}=\bm{V}+\ci\bm{U}$ is a complex symmetric matrix. We can interpret $\bm{Z}$ as the adjacency matrix for an undirected graph with complex-valued edge weights. Therefore, any Gaussian pure states can be interpreted as a Gaussian graph states with complex-valued weights on the graph edge. As Gaussian states are uniquely identified by the linear combinations of position and momentum operators that nullify them~\cite{PhysRevLett.88.097904}, we can consider what the nullifiers are for the CV graph state defined by $\bm{Z}$ and $\bm{c}=\left(\bm{c}_{\bm{q}},\bm{c}_{\bm{p}}\right)^\intercal$:
    \begin{equation}
        \begin{aligned}
            \left(\hat{\bm{p}}-\bm{Z}\hat{\bm{q}}+\bm{Z}\bm{c}_{\bm{q}}-\bm{c}_{\bm{p}}\right)\ket{\psi_{\bm{Z},\bm{c}}}&= \left(\hat{\bm{p}}-\bm{Z}\hat{\bm{q}}+\bm{Z}\bm{c}_{\bm{q}}-\bm{c}_{\bm{p}}\right)\bm{R}_{\bm{Z},\bm{c}}\ket{0}\\ &= \bm{R}_{\bm{Z},\bm{c}}\bm{R}_{\bm{Z},\bm{c}}^{\dagger}\left(\hat{\bm{p}}-\bm{Z}\hat{\bm{q}}+\bm{Z}\bm{c}_{\bm{q}}-\bm{c}_{\bm{p}}\right)\bm{R}_{\bm{Z},\bm{c}}\ket{0}\\
            &=\bm{R}_{\bm{Z},\bm{c}}\begin{pmatrix}-\bm{Z} & \bm{I}\end{pmatrix}\left(\bm{S}_{\bm{Z}}\hat{\bm{x}}+\bm{c}\right)\ket{0}+\left(\bm{Z}\bm{c}_{\bm{q}}-\bm{c}_{\bm{p}}\right)\bm{R}_{\bm{Z},\bm{c}}\ket{0}\\
            &=\bm{R}_{\bm{Z},\bm{c}}\begin{pmatrix}-\bm{Z} & \bm{I}\end{pmatrix}\begin{pmatrix}\bm{U}^{-1/2}\hat{\bm{q}}+\bm{c}_{\bm{q}}\\ \bm{V}\bm{U}^{-1/2}\hat{\bm{q}}+\bm{U}^{1/2}\hat{\bm{p}}+\bm{c}_{\bm{p}}\end{pmatrix}\ket{0}+\left(\bm{Z}\bm{c_q}-\bm{c_p}\right)\bm{R}_{\bm{Z},\bm{c}}\ket{0}\\
            &=\bm{R}_{\bm{Z},\bm{c}}\left(-\ci\bm{U}^{1/2}\hat{\bm{q}}+\bm{U}^{1/2}\hat{\bm{p}}-\bm{Z}\bm{c}_{\bm{q}}+\bm{c_p}\right)\ket{0}+\left(\bm{Z}\bm{c}_{\bm{q}}-\bm{c}_{\bm{p}}\right)\bm{R}_{\bm{Z},\bm{c}}\ket{0}\\
            &=\bm{R}_{\bm{Z},\bm{c}}\left(-\bm{Z}\bm{c_q}+\bm{c_p}\right)\ket{0}+\left(\bm{Z}\bm{c}_{\bm{q}}-\bm{c}_{\bm{p}}\right)\bm{R}_{\bm{Z},\bm{c}}\ket{0}\\
            &=\left(-\bm{Z}\bm{c}_{\bm{q}}+\bm{c_p}\right)\bm{R}_{\bm{Z},\bm{c}}\ket{0}+\left(\bm{Z}\bm{c}_{\bm{q}}-\bm{c}_{\bm{p}}\right)\bm{R}_{\bm{Z},\bm{c}}\ket{0}=\bm{0}.
        \end{aligned}
    \end{equation}
    In the second to last line, we have use the fact that $-\bm{Z}\bm{c}_{\bm{q}}+\bm{c}_{p}$ is a c-number vector, which commutes with $\bm{R}_{\bm{Z},\bm{c}}$. In addition, we also use the fact that $\bm{Z}=\bm{V}+\ci\bm{U}$, and $\left(-\ci\hat{\bm{q}}+\hat{\bm{p}}\right)\ket{0}=-\ci\sqrt{2}\hat{\bm{a}}\ket{0}=0$. Therefore, the Gaussian graph state with complex adjacency matrix $\bm{Z}$ and mode shift center $\bm{c}$ has complex nullifiers: 
    \begin{equation}
        \left(\hat{\bm{p}}-\bm{c}_{\bm{p}}-\bm{Z}\left(\hat{\bm{q}}-\bm{c}_{\bm{q}}\right)\right)\ket{\psi_{\bm{Z},\bm{c}}}=\bm{0}.
    \end{equation}
    
    We now restrict to the case $\bm{V}=\bm{0}$ and $\bm{c}_{\bm{p}}=\bm{0}$, which are the class of states we consider in our numerical experiments. Then, the nullifiers are of the form: 
    \begin{equation}
        \left(\hat{\bm{p}}-\ci\bm{U}\hat{\bm{q}}+\ci\bm{U}\bm{c}_{\bm{q}}\right)\ket{\psi_{\bm{Z},\bm{c}}}=0.
    \end{equation}
    We also restrict to symplectic transformations of the quadratures of the form:
    \begin{equation}
        \bm{S}=\begin{pmatrix}\bm{W}^\intercal &\bm{0}\\\bm{0}&\bm{W}^{-1}\end{pmatrix},\label{eq:s_rel}
    \end{equation}
    where $\bm{W}$ is some assumed invertible matrix. We restrict to operations of this form to more efficiently allow for the classical simulation of the contextual RNN using Gaussian operations of an identical form, as discussed in \cite{calcuth2022} and Appendix~\ref{sec:contextual_sim_deets}. After performing the quantum operation described by such a symplectic matrix, the nullifier for the whole system is updated as $\left(\bm{W}^{-1}\hat{\bm{p}}-\ci\bm{U}\bm{W}^\intercal \hat{\bm{q}}+\ci\bm{U}\bm{c_q}\right)\ket{\psi}=\bm{0}$, which is equivalent to $\left(\hat{\bm{p}}-\ci\bm{W}\bm{U}\bm{W}^\intercal \hat{\bm{q}}+\ci\bm{W}\bm{U}\bm{c_q}\right)\ket{\psi}=\bm{0}$. Homodyne detection of the position quadrature on $m$ qumodes is then, in general, a multivariate Gaussian random variable which is centered at $\bm{\varPi_Y}\bm{W}^{-1\intercal}\bm{c_q}$ with variance $\bm{\varPi_Y}\bm{U}^{-1}\bm{\varPi_Y}^\intercal$, where $\bm{\varPi_Y}$ is the projection operator onto the subspace of the $m$ qumodes being measured. To make the training of our Gaussian models more stable---and to maintain simulability with GKP input states, as is done in Appendix~\ref{sec:contextual_sim_deets}---we assume in our numerical simulations that there is an implicit large scaling for $\bm{U},\bm{c_q}$ such that this variance is small. After the measurement, the hidden state at $i$th step is updated to a generalized graph state with adjacency matrix $\bm{\varPi_H}\bm{U}\bm{\varPi_H}^\intercal$ and mode shift $\bm{\varPi_H}\bm{W}^{-1\intercal}\bm{c_q}$, where $\bm{\varPi_H}$ is the projection operator onto the subspace of the latent $n$ qumodes.
    
    \subsection{Gaussian Operations on GKP States}\label{sec:contextual_sim_deets}
    
    Our methods for the simulation of (restricted) Gaussian operations on GKP states are similar to the methods used by \cite{calcuth2022}. We restrict to (unnormalized) states of the form:
    \begin{equation}
        \ket{\psi}=\sum\limits_{\bm{\ell}\in\mathbb{Z}^n}\ket{\psi_{\bm{\ell}}},
    \end{equation}
    where each $\ket{\psi_{\bm{\ell}}}$ is a Gaussian state with nullifiers given by:
    \begin{equation}
        \bm{\hat{n}_{\bm{\ell}}}=\epsilon\bm{\hat{p}}-\bm{Z}\left(\bm{\hat{q}}-\bm{c_q}-\bm{L}\bm{\ell}\right).
    \end{equation}
    We assume $\epsilon\to 0^+$, such that all $\ket{\psi_{\bm{\ell}}}$ are approximately orthogonal. Time evolution is simulated as in Appendix~\ref{sec:gaussian_state_sim_deets}, simultaneously for all $\ket{\psi_{\bm{\ell}}}$. As all $\ket{\psi_{\bm{\ell}}}$ are approximately orthogonal, measurement is simulated via choosing $\bm{\ell}$ uniformly at random, and performing the corresponding measurement. In principle, using many measurements (over multiple instances of the state), one can read out $\bm{L}$ and $\bm{c_q}$; these are the measurement results we use in our numerical expirements, as described in Appendix~\ref{sec:quantum_nums_deets}. For any given measurement on a subset of the modes, the post-measurement state on the remainder of the modes is just the uniform superposition over all $\bm{\ell}$ consistent with the resulting measurement outcome $\bm{y}$.
    
    To make this latter observation more concrete, assume after evolution under $\bm{S}$ of the form of Eq.~\eqref{eq:s_rel}, the nullifiers of $\ket{\psi_{\bm{\ell}}}$ are (see Appendix~\ref{sec:numerics_deets}):
    \begin{equation}
        \bm{\hat{n}_{\bm{\ell}}}=\epsilon\bm{\hat{p}}-\bm{W}\bm{Z}\bm{W}^\intercal\left(\bm{\hat{q}}-\bm{W}^{-1\intercal}\bm{c_q}-\bm{W}^{-1\intercal}\bm{L}\bm{\ell}\right),
    \end{equation}
    We wish to find all $\bm{\ell}$ consistent with the position measurement result $\bm{y}$ (in the limit $\epsilon\to 0^+$); that is, all $\bm{\ell}$ such that:
    \begin{equation}
        \bm{\varPi_Y}\left(\bm{W}^{-1\intercal}\bm{c_q}+\bm{W}^{-1\intercal}\bm{L}\bm{\ell}\right)=\bm{y},
        \label{eq:ell_const}
    \end{equation}
    where $\bm{\varPi_Y}$ is the projector onto the $m$ mode space being measured. Let $H$ label the space of the other $n$ modes, and the projector onto this space $\bm{\varPi_H}$. Writing (with the assumptions that $\bm{\tilde{W}_{YY}}$ and $\bm{W_{HH}}$ are full rank):
    \begin{align}
        \bm{W}^{-1\intercal}&=\begin{pmatrix}
        \bm{\tilde{W}_{HH}} & \bm{\tilde{W}_{HY}}\\
        \bm{\tilde{W}_{YH}} & \bm{\tilde{W}_{YY}}
        \end{pmatrix},\\
        \bm{W}^\intercal&=\begin{pmatrix}
        \bm{W_{HH}} & \bm{W_{HY}}\\
        \bm{W_{YH}} & \bm{W_{YY}}
        \end{pmatrix},\\
        \bm{\ell}&=\begin{pmatrix}
        \bm{\ell_H}\\
        \bm{\ell_Y}
        \end{pmatrix},
    \end{align}
    and assuming $\bm{L}$ (assumed full rank) is of the form:
    \begin{equation}
        \bm{L}=\begin{pmatrix}
        \bm{L_{HH}} & 0\\
        0 & \bm{L_{YY}}
        \end{pmatrix},
    \end{equation}
    we find from Eq.~\eqref{eq:ell_const} that all $\bm{\ell}$ consistent with the measurement result satisfy:
    \begin{equation}
        \bm{L_{YY}}\bm{\ell_Y}=\bm{\tilde{W}_{YY}}^{-1}\left(\bm{y}-\bm{\varPi_Y}\bm{W}^{-1\intercal}\bm{c_q}-\bm{\tilde{W}_{YH}}\bm{L_{HH}}\bm{\ell_H}\right).
    \end{equation}
    Assuming the entries of $\bm{L_{YY}}$ are sufficiently small, up to any given machine precision the $\bm{\ell_H}$ are in one-to-one correspondence with the $\bm{\ell}$ consistent with the measurement result. Furthermore, for all such $\bm{\ell}$:
    \begin{equation}
        \begin{aligned}
            \bm{\varPi_H}\left(\bm{W}^{-1\intercal}\bm{c_q}+\bm{W}^{-1\intercal}\bm{L}\bm{\ell}\right)=&\bm{\varPi_H}\bm{W}^{-1\intercal}\bm{c_q}+\bm{\tilde{W}_{HH}}\bm{L_{HH}}\bm{\ell_H}+\bm{\tilde{W}_{HY}}\bm{L_{YY}}\bm{\ell_Y}\\
            =&\bm{\varPi_H}\bm{W}^{-1\intercal}\bm{c_q}+\bm{\tilde{W}_{HH}}\bm{L_{HH}}\bm{\ell_H}+\bm{\tilde{W}_{HY}}\bm{\tilde{W}_{YY}}^{-1}\left(\bm{y}-\bm{c_q}-\bm{\tilde{W}_{YH}}\bm{L_{HH}}\bm{\ell_H}\right)\\
            =&\bm{\varPi_H}\bm{W}^{-1\intercal}\bm{c_q}+\left(\bm{\tilde{W}_{HH}}-\bm{\tilde{W}_{HY}}\bm{\tilde{W}_{YY}}^{-1}\bm{\tilde{W}_{YH}}\right)\bm{L_{HH}}\bm{\ell_H}\\
            &+\bm{\tilde{W}_{HY}}\bm{\tilde{W}_{YY}}^{-1}\left(\bm{y}-\bm{\varPi_Y}\bm{W}^{-1\intercal}\bm{c_q}\right)\\
            =&\bm{\varPi_H}\bm{W}^{-1\intercal}\bm{c_q}+\bm{W_{HH}}^{-1}\bm{L_{HH}}\bm{\ell_H}+\bm{\tilde{W}_{HY}}\bm{\tilde{W}_{YY}}^{-1}\left(\bm{y}-\bm{\varPi_Y}\bm{W}^{-1\intercal}\bm{c_q}\right).
            \label{eq:post_meas_centers}
        \end{aligned}
    \end{equation}
    Using the appropriate displacement operator to remove the final term of Eq.~\eqref{eq:post_meas_centers}, then, yields the effective transformation:
    \begin{align}
        \bm{c_q}&\mapsto\bm{\varPi_H}\bm{W}^{-1\intercal}\bm{c_q},\\
        \bm{L}&\mapsto\bm{W_{HH}}^{-1}\bm{L_{HH}}.
    \end{align}
    
    \section{Details of the Numerical Simulations}\label{sec:numerics_deets}
    
    \begin{figure}
        \begin{center}
            \includegraphics[width=0.75\linewidth]{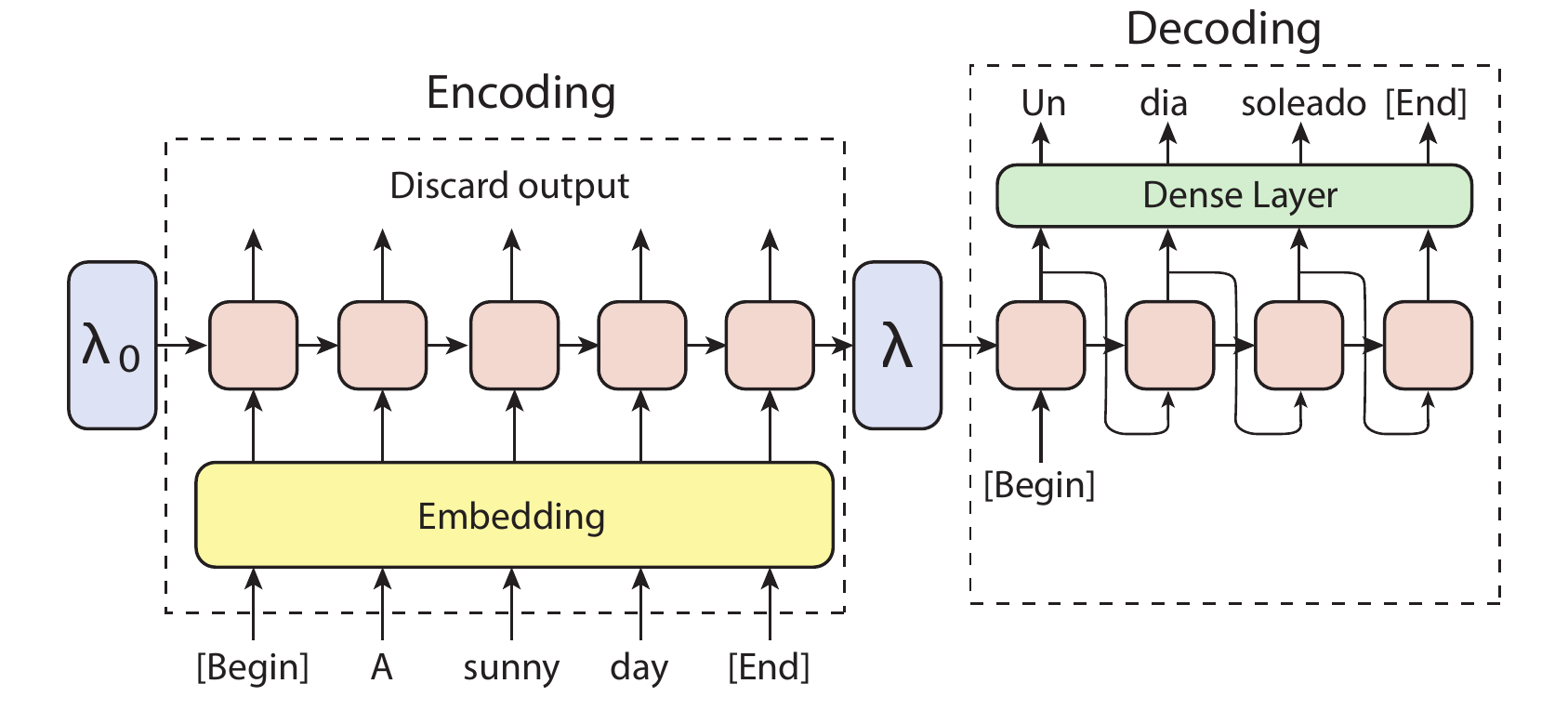}
            \caption{An overview of the recurrent models we study. Each red box represents the recurrent cell, and the variational parameters in the recurrent cells are shared within the encoder and decoder. $\bm{\lambda_0}$ is a random fixed hidden memory vector. In the decoding phase, the output of each recurrent cell is also treated as the input for the next recurrent cell.\label{fig:online_architecture}}
        \end{center}
    \end{figure}
    \begin{figure}
        \begin{center}
            \includegraphics[width=0.45\linewidth]{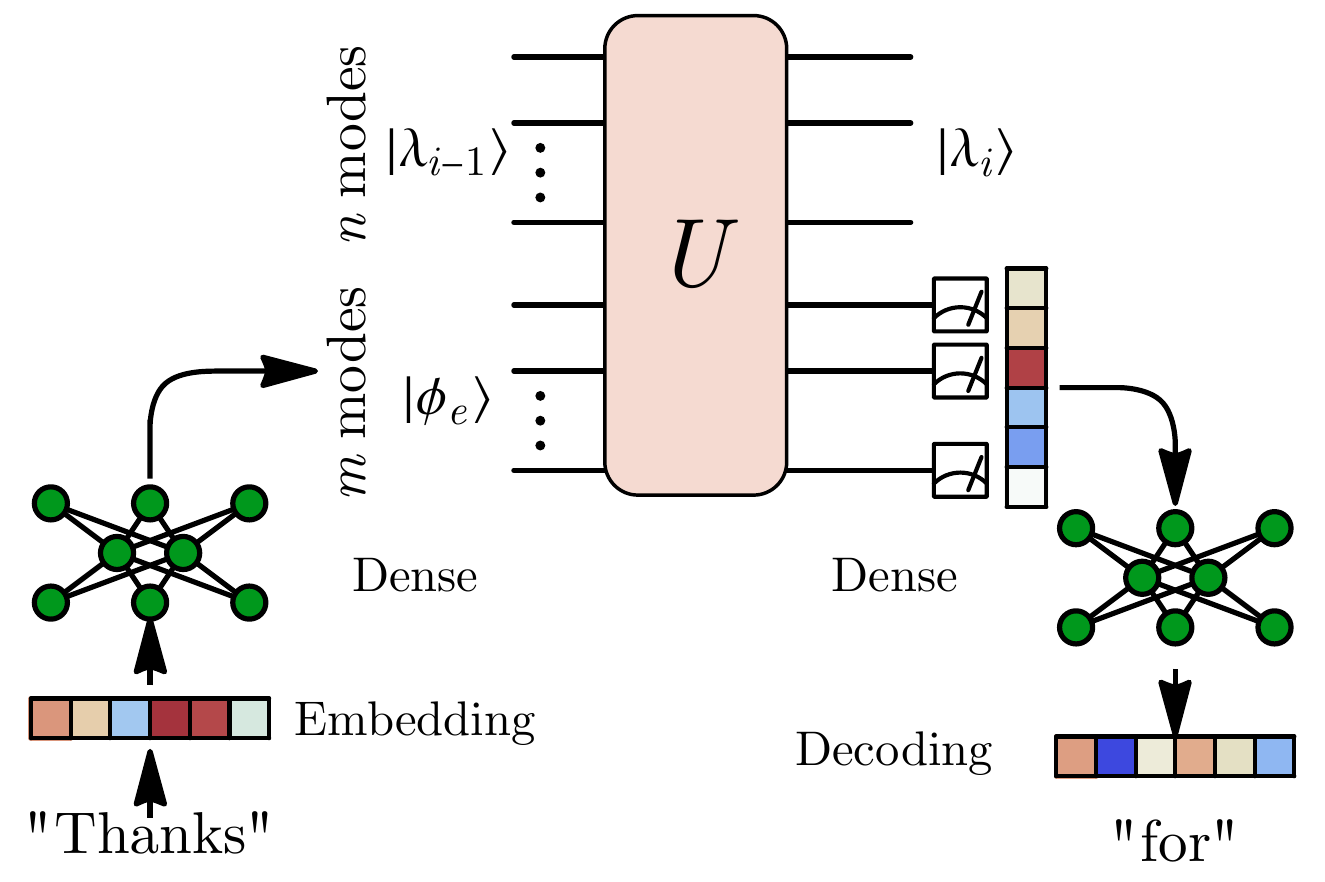}
            \caption{One recurrent cell of the quantum recurrent architectures. Note that the only trained part of the dense network at the input of each recurrent cell are displacements in phase space acting on $\ket{\phi_e}$, in order to keep the number of trainable parameters in line with the GRU RNN at the same $n$. $\ket{\phi_e}$ is a Gaussian state for the Gaussian model, and a GKP state for the CRNN.\label{fig:quantumcell}}
        \end{center}
    \end{figure}
    We now discuss the details of the numerical simulations performed in Sec.~\ref{sec:numerics}. For all models, we studied the performance in modeling a standard Spanish-to-English data set~\cite{spaengdata}. For each model and each $n$, the training set was taken to be a random sample of 80\% of the data set, and the test set 20\%. Each model was trained for 80 epochs, with a batch size of 64. To map the words in this data set to vectors (taken to also be of dimension $n$, the model dimension), we used the Keras~\cite{chollet2015keras} implementation of word2vec (``TextVectorization'') adapted to the data, with a maximum vocabulary size of 5000. The first 5000 most frequent words are mapped to distinct integers, and other words are mapped to a unique token ``[Unk].'' At the beginning and end of each sentence, we add unique ``[Begin]'' and ``[End]'' tokens. For the recurrent translation models, such as the GRU RNN or the CRNN, we do not need to set the length of the sentences. For the Transformer model, we set the sentence length to be 20 words. If the sentence is shorter than 20 words, the additional token ``[Pad]'' is added to the sentence. For the rare case when the sentence contains more than 20 words, additional words are truncated. All networks were trained using Adam~\cite{DBLP:journals/corr/KingmaB14} (with a learning rate of $10^{-3}$, $\beta_1=0.9$, $\beta_2=0.999$, and $\epsilon=10^{-7}$), trained on the forward empirical cross entropy (as the backward empirical cross entropy is difficult to train on).

    \subsection{Classical Sequence Models}\label{sec:classical_numerics_deets}
    
    We studied three standard classical sequence models in our numerical experiments: an implementation of an orthogonal recurrent neural network~\cite{pmlr-v70-jing17a}, an implementation of a network using gated recurrent units (GRU)~\cite{cho-etal-2014-learning}, and an implementation of a Transformer~\cite{10.5555/3295222.3295349}. The first two networks were trained in a seq2seq configuration~\cite{10.5555/2969033.2969173}; the models autoregressively map the input sequence to a latent space, which then is autoregressively decoded. For the orthogonal recurrent neural network, we used the implementation of \cite{pmlr-v70-jing17a}, with network capacity equal to the model dimension. An illustration of these architectures is given in Fig.~\ref{fig:online_architecture}.
    
    The Transformer models we considered follow the standard construction of \cite{10.5555/3295222.3295349}. To fairly compare against the shallow RNN cells we consider, we used a single Transformer encoder and decoder layer for each model. Our implementation used a trained positional embedding with uniform initialization, and the encoders and decoders used ReLU activations in the feedforward network layers, Glorot weight initialization, zero bias initialization, no dropout, and a single head. Each encoder and decoder layer was followed by the layer normalization implementation of Keras~\cite{chollet2015keras} with its default parameters. The final layer normalization of the decoder of each Transformer we considered was followed by a dense layer with a softmax activation function.
    \begin{algorithm}
        \KwIn{\tcp*[h]{cell inputs}\newline$n\times n$ latent graph adjacency matrix $\bm{A_{i-1}}$\newline$n\times n$ lattice $\bm{J_{i-1}}$\newline $n\times 1$ stabilizer phases $\bm{\alpha_{i-1}}$\newline $m\times 1$ input $\bm{x_i}$\newline\newline\tcp*[h]{trainable parameters}\newline$\left(n+m\right)\times\left(n+m\right)$ weight matrix $\bm{W}$\newline Dense layers $\bm{f},\bm{g}$\newline\newline\tcp*[h]{constants}\newline Dense layers $\bm{h},\bm{r}$\newline Projectors onto latent and input spaces, respectively, $\bm{\varPi_H},\bm{\varPi_Y}$}
        \KwOut{$n\times n$ latent graph adjacency matrix $\bm{A_i}$\newline $n\times n$ lattice $\bm{J_i}$\newline$n\times 1$ stabilizer phases $\bm{\alpha_i}$\newline $m\times 1$ measurement outcome $\bm{y}$}
        \Begin {
            $\bm{\alpha}\leftarrow\bm{\alpha_{i-1}}+\bm{A_{i-1}}^{-1}\bm{f}\left(\bm{x_i}\right)$\tcp*{perform mode shifts using a general function $f$}
            \BlankLine
            \BlankLine
            $\bm{\beta}\leftarrow\bm{g}\left(\bm{x_i}\right)$\tcp*{prepare the state associated with the input register}
            $\bm{K}\leftarrow\bm{h}\left(\bm{x_i}\right)$\;
            $\bm{S}\leftarrow\bm{r}\left(\bm{x_i}\right)$\;
            $\bm{B}=\bm{S}\bm{S}^\intercal$\tcp*{ensure the adjacency matrix is positive semidefinite}
            \BlankLine
            \BlankLine
            $\bm{U}\leftarrow\bm{A_{i-1}}\oplus\bm{B}$\tcp*{consider the tensor product of the latent and input states}
            $\bm{\gamma}\leftarrow\bm{\alpha}\oplus\bm{\beta}$\;
            $\bm{L}\leftarrow\bm{J_{i-1}}\oplus\bm{K}$\;
            \BlankLine
            \BlankLine
            $\bm{U}\leftarrow\bm{W}\bm{U}\bm{W}^\intercal$\tcp*{transform the graph state by performing a Gaussian operation}
            $\bm{\gamma}\leftarrow\bm{W}^{-1\intercal}\bm{\gamma}$\;
            $\bm{L}\leftarrow\bm{W}^{-1\intercal}\bm{L}$\;
            \BlankLine
            \BlankLine
            $\bm{y}\leftarrow\bm{\varPi_Y}\bm{L},\bm{\varPi_Y}\bm{\gamma}$\tcp*{read out the lattice and stabilizer phases}
            \BlankLine
            \BlankLine
            $\bm{A_i}\leftarrow\bm{\varPi_H}\bm{U}\bm{\varPi_H}^\intercal$\tcp*{project out the measured register}
            $\bm{J_i}\leftarrow\left(\bm{\varPi_H}\bm{W}\bm{\varPi_H}^\intercal\right)^{-1}\bm{J_{i-1}}$\;
            $\bm{\alpha_i}\leftarrow\bm{\varPi_H}\bm{\gamma}$\;
        }
        \caption{Contextual RNN Cell\label{alg:context_sim_pseudo}}
    \end{algorithm}
    
    \subsection{Quantum Sequence Models}\label{sec:quantum_nums_deets}
    
    Based on the discussion in Appendix~\ref{sec:sim_deets}, we simulated both the Gaussian RNN and the contextual RNN described in Sec.~\ref{sec:numerics}. The training and architecture of these models were identical to those of the orthogonal neural network described in Appendix~\ref{sec:classical_numerics_deets}, other than the structure of each unit cell of the recurrent network. Once again, Fig.~\ref{fig:online_architecture} describes the overall architecture of the models, and Fig.~\ref{fig:quantumcell} describes the recurrent cell of these models. For the Gaussian model, the simulated homodyne position measurements are what we used for our cell outputs; for the CRNN, we simulated the lattice (and position measurement) readout procedure described at the end of Appendix~\ref{sec:contextual_sim_deets}. The pseudocode for the cells of both are given in Algorithm~\ref{alg:context_sim_pseudo}; for the Gaussian model, one takes $\bm{J_0},\bm{K}=\bm{0}$. By considering the Gaussian RNN cell as a limit of the contextual RNN cell, and by not training $\bm{K}$ or $\bm{J_0}$ in the contextual RNN, we were able to maintain identical parameter counts for the two models. For the contextual RNN, $\bm{J_0}$ is fixed to the identity, and $\bm{K}$ is the result of an untrained dense layer $\bm{h}$ applied to the cell input (with uniform Glorot initialization and linear activation function, and biased such that $\bm{K}$ has mean the identity). Specializing to the notation of Algorithm~\ref{alg:context_sim_pseudo}: $\bm{r}$ is similar. $\bm{f}$ and $\bm{g}$ are similar, except with no bias.
    
    \subsection{Time Complexity}\label{sec:time_complexity}
    
    We now discuss the time complexity of implementing a CRNN, both as a quantum model implemented on a quantum computer, and as a quantum-inspired classical algorithm. On a quantum computer, each cell of the CRNN we consider in our proofs of an expressivity separation can be implemented in depth $\operatorname{O}\left(n\right)$, assuming access to the fixed ancilla state $\ket{a}$ used for non-Gaussian measurement. This further decreases to $\operatorname{O}\left(1\right)$ time if one assumes access to quantum fan-out~\cite{v001a005}. More general Gaussian operations can also be implemented in depth $\operatorname{O}\left(n\right)$ utilizing a swap network~\cite{PhysRevLett.120.110501}.
    
    Examining Algorithm~\ref{alg:context_sim_pseudo}, it is easy to see that our algorithm simulates inference on a CRNN with model dimension $n$ with time complexity $\operatorname{O}\left(n^{\upomega}\right)$. Here, $\upomega$ is the matrix multiplication exponent, with best-known bounds $2\leq\upomega<2.37286$~\cite{doi:10.1137/1.9781611976465.32}. Our results show that on certain tasks, such a CRNN of model dimension $n$ performs on par with e.g. GRU RNNs with model dimension $\operatorname{\Omega}\left(n^2\right)$, which performs inference in time $\operatorname{\Omega}\left(n^4\right)$ due to matrix-vector multiplications present in the model~\cite{cho-etal-2014-learning}. Thus, our classical simulation of CRNNs may be thought of as a quantum-inspired classical model that, though it is not efficient as implementing a CRNN on a quantum computer, is asymptotically more time efficient in inference and training than typical RNNs with an $n^2$-dimensional latent space. Of course, our Algorithm~\ref{alg:context_sim_pseudo} relies on matrix inversion. Though asymptotically matrix inversion takes time $\operatorname{O}\left(n^{\upomega}\right)$, unlike matrix multiplication it has poor GPU implementations and thus often is slow in practice. We leave further investigation of the practical utility of these quantum-inspired classical models to future work.
    
    \section{Supplementary Numerical Results}\label{sec:sup_nums}
    
    We now provide supplemental numerical experiments, comparing CRNNs with Transformers~\cite{10.5555/3295222.3295349} and a formulation of linear RNNs dubbed efficient unitary neural networks (EUNNs)~\cite{pmlr-v70-jing17a}.
    
    \begin{figure}
        \begin{center}
            \includegraphics[width=0.5\textwidth]{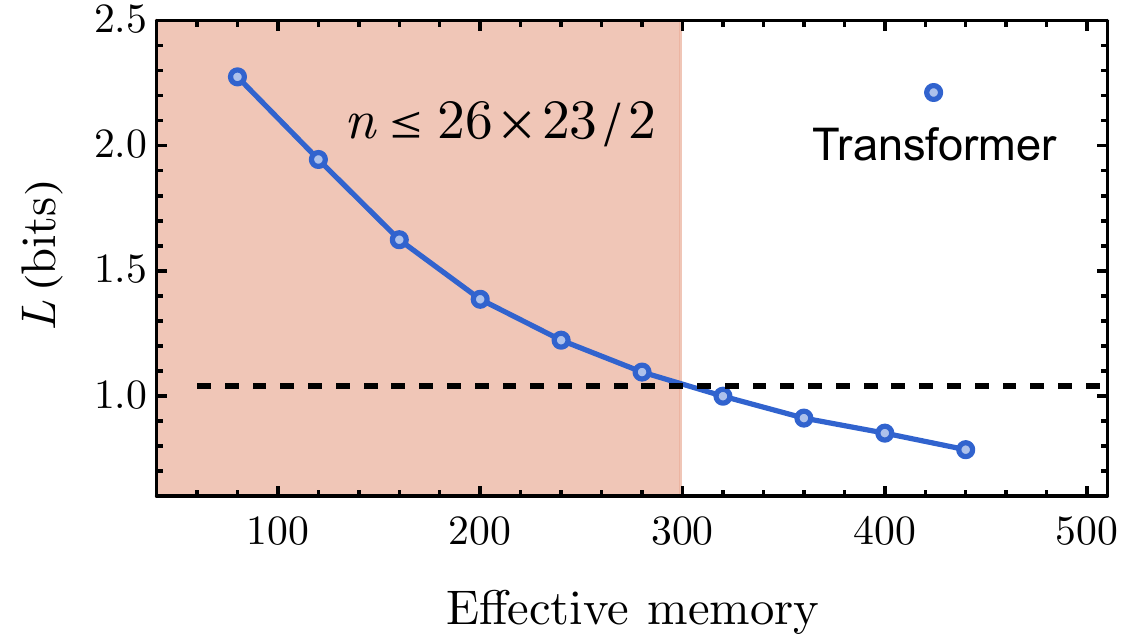}
            \caption{The performances of the $n=26$ CRNN model and the Transformer models. The dashed line shows the converged forward empirical cross entropy ($L$) for the $n=26$ CRNN model. The dimension of the latent space (labeled ``Effective memory'') of a Transformer is the model dimension multiplied by the length of the (padded) input sentences. The red region labeled ``$n\leq 26\times 23/2$'' is where the dimension of the memory of the Transformer is at most $\frac{n\left(n-3\right)}{2}$, where $n$ is the model dimension of the CRNN.\label{fig:trans_results}}
        \end{center}
    \end{figure}
    \begin{figure}
        \begin{center}
            \includegraphics[width=0.5\textwidth]{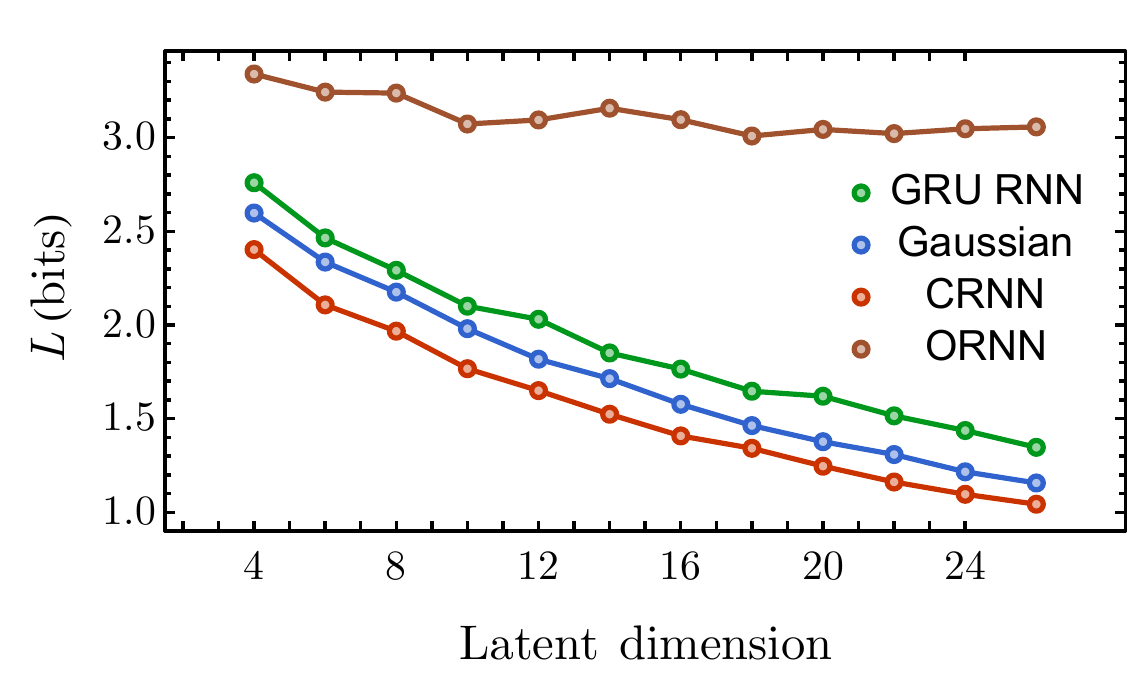}
            \caption{The converged forward empirical cross entropy ($L$) as a function of the model dimensions $n$ for ORNNs, and the online models we considered in Sec.~\ref{sec:numerics}. We see that ORNNs are greatly outperformed by the other online models we consider at the given task.\label{fig:ornn_results}}
        \end{center}
    \end{figure}
    The difficulty in comparing CRNNs and Transformers is that the effective memory of a Transformer---in the language of Fig.~\ref{fig:encoder_decoder_model}(b), the dimension $n$ of the latent space of the model---grows with the sentence length. Thus, we fixed a trained $n=26$ CRNN, and compare the performance of a Transformer at a variety of model dimensions against this model. We plot these results in Fig.~\ref{fig:trans_results}. As a guide to the eye, we also plot where the effective memory of the Transformer is at most $\frac{n\left(n-3\right)}{2}$, where $n$ is the model dimension of the CRNN.
    
    We also consider the performance of EUNNs compared with CRNNs, as both are linear models. We constrain the EUNN to be real---as in our simulations of Gaussian models and CRNNs---and call the resulting model an \emph{orthogonal recurrent neural network} (ORNN), using the implementation from \cite{pmlr-v70-jing17a}. We see in Fig.~\ref{fig:ornn_results} that CRNNs---and indeed, all of the online models we consider---greatly outperformed ORNNs at a variety of model dimensions.
\end{spacing}
\end{document}